\theoremstyle{plain}% Theorem-like structures provided by amsthm.sty
\newtheorem{theorem}{Theorem}[section]
\newtheorem{lemma}[theorem]{Lemma}
\newtheorem{proposition}[theorem]{Proposition}
\newcommand{\yadd}[1]{\textcolor{black}{#1}}
\begin{document}

\title{Matrix similarity transformations derived from extended $q$-analogues of the Toda equation and Lotka-Volterra system}
%
%\author{\name{R. Watanabe\textsuperscript{a}, M. Shinjo\textsuperscript{b} and M. Iwasaki\textsuperscript{a}\thanks{CONTACT R. Watanabe: r\_watanabe@mei.kpu.ac.jp; M. Shinjo: mshinjo@mail.doshisha.ac.jp; M. Iwasaki: imasa@kpu.ac.jp}}\affil{\textsuperscript{a}Faculty of Life and Environmental Sciences, Kyoto Prefectural University, Kyoto, Japan; \textsuperscript{b}Faculty of Science and Engineering, Doshisha University, Kyoto, Japan}}
%
\author{R. Watanabe\footnote{Faculty of Life and Environmental Sciences, Kyoto Prefectural University (r\_watanabe@mei.kpu.ac.jp)}, \and M. Shinjo\footnote{Faculty of Science and Engineering, Doshisha University (mshinjo@mail.doshisha.ac.jp)} \and and \and M. Iwasaki\footnote{Faculty of Life and Environmental Sciences, Kyoto Prefectural University (imasa@kpu.ac.jp)}}
\maketitle
\begin{abstract} % abstract
The $q$-Toda equation is derived from replacing ordinary derivatives with $q$-derivatives in the famous Toda equation. 
In this paper, we associate an extension of the $q$-Toda equation with matrix eigenvalue problems, and then show applications of its time-discretization to computing matrix eigenvalues. 
With respect to the Lotka-Volterra system, we also have the similar discussion on the case of the Toda equation.
\end{abstract}
%
%\begin{keywords}
%$q$-derivative; Toda equation; integrable system; eigenvalue
%\end{keywords}
%
%\begin{amscode}
%``37N30", ``65F15"
%\end{amscode}
%
% -------------------------
\section{Introduction}
\label{sec:1}
% -------------------------
One of the most famous integrable systems is the \yadd{finite} Toda equation \cite{Flaschka_1974_1, Flaschka_1974_2} expressed using the so-called Flaschka's variables: 
\begin{align}
\label{eqn:Toda}
\left\{\begin{aligned}
& \dfrac{dx_k(t)}{dt}=y_k(t)-y_{k-1}(t),\quad k=1,2,\dots,m,\\ 
& \dfrac{dy_k(t)}{dt}=y_k(t)\left(x_{k+1}(t)-x_k(t)\right),\quad k=1,2,\dots,m-1,\\
& y_0(t)\coloneqq 0,\quad y_m(t)\coloneqq 0, 
\end{aligned}\right.
\end{align}
where $t$ and $k$ respectively denote continuous time and the spatial index. 
The Toda equation \eqref{eqn:Toda} was first considered in a study of nonlinear spring dynamics
and the Toda variables $x_k(t)$ and $y_k(t)$ respectively correspond to the $k$th point-mass variation from its momentum and the equilibrium point at continuous-time $t$ \cite{Toda_1967}.
The Toda equation \eqref{eqn:Toda} was also studied from the perspective of 
orthogonal polynomials \cite{Toda_1989} and rational approximations \cite{Nikishin_1991}.
Symes~\cite{Symes_1982} found an interesting relationship where the continuous-time evolution from $t$ to $t+1$ coincides with one step of the well-known $QR$ algorithm \cite{Golub_1996} in the case where target matrices are exponentials of symmetric tridiagonal matrices. 
A time-discretization \cite{Hirota_1981} of the Toda equation \eqref{eqn:Toda} leads to 
\begin{align}
\label{eqn:qd}
\left\{\begin{aligned}
& Q_k^{(n+1)}+E_{k-1}^{(n+1)}=Q_k^{(n)}+E_k^{(n)},\quad k=1,2,\dots,m,\\
& Q_k^{(n+1)}E_k^{(n+1)}=Q_{k+1}^{(n)}E_k^{(n)},\quad k=1,2,\dots,m-1,\\
& E_0^{(n)}\coloneqq 0,\quad E_m^{(n)}\coloneqq 0,
\end{aligned}\right.
\end{align}
where $Q_k^{(n)}$ and $E_k^{(n)}$ respectively denote the values of $Q_k$ and $E_k$ at discrete-time $n$.
Sogo also showed that the discrete Toda equation \eqref{eqn:qd} is simply the recursion formula of the quotient-difference (qd) algorithm \cite{Rutishauser_1990} for computing symmetric tridiagonal matrices. 
Similarity transformations employed in the qd algorithm are $LR$ transformations that decompose symmetric tridiagonal matrices into the products of lower bidiagonal matrices $L^{(n)}$ and upper bidiagonal matrices $R^{(n)}$ and then generate new symmetric tridiagonal matrices as the products $R^{(n)}L^{(n)}$. 
In other words, discrete-time evolutions in the discrete Toda equation \eqref{eqn:qd} give a sequence of $LR$ transformations for symmetric tridiagonal matrices. 
See also \cite{Gautschi_2002} for the symmetric $LR$ transformations from the perspective of the orthogonal polynomials. 
\par
An extension of the Toda equation \eqref{eqn:Toda} is the \yadd{Kostant-Toda} equation \cite{Kostant_1979}, and the hungry Toda equation is a special case of the \yadd{Kostant-Toda} equation \cite{Shinjo_2020}. 
The $q$-analogue of the Toda equation \cite{Area_2018}, which involves the so-called $q$-parameter satisfying $0<q<1$, can also be regarded as an extension of the Toda equation \eqref{eqn:Toda}. 
This is because taking the limit $q\to 1$ in the $q$-Toda equation leads to the Toda equation \eqref{eqn:Toda}.
The \yadd{Kostant-Toda} equation including the hungry Toda equation is associated with matrix eigenvalue problems, 
and a time-discretization of the hungry Toda equation is shown to be applicable to computing eigenvalues of totally nonnegative matrices 
whose minors are all positive \cite{Fukuda_2013}. 
However, to the best of our knowledge, the relationships of the $q$-Toda equation and its time-discretization 
to eigenvalue problems have not yet been clarified. 
The main purpose of this paper is thus to associate the $q$-Toda equation -- or more precisely, its further extension -- 
with eigenvalue problems and then present applications of its time-discretization to computing eigenvalues. 
Constructing matrix representations called Lax representations of integrable systems is useful in finding the relationships to eigenvalue problems. 
In particular, in the case of discrete integrable systems, the Lax representations are usually related to $LR$ transformations directly. 
The key point of this paper is not to be too particular about $LR$ transformations 
in designing Lax representations of the extended $q$-Toda equation and its time-discretization. 
Based on the resulting Lax representations, we examine distinctive similarity transformations 
generated from the extended $q$-discrete Toda equation.
\par
An integrable system closely related to the Toda equation is the Lotka--Volterra (LV) system, 
which describes predator--prey interactions 
\begin{align}
\label{eqn:LV}
\left\{\begin{aligned}
& \dfrac{du_k(t)}{dt}=u_k(t)\left(u_{k+1}(t)-u_{k-1}(t)\right),\quad k=1,2,\dots,2m-1,\\
& u_0(t)\coloneqq 0,\quad u_{2m}(t)\coloneqq 0, 
\end{aligned} \right. 
\end{align}
where $k$ denotes species index and $u_{k}(t)$ corresponds to the number of the $k$th species at continuous-time $t$. 
The hungry Lotka--Volterra (hLV) system \cite{Bogoyavlenskii_1991,Itoh_1987} is an extension of the LV system \eqref{eqn:LV}. 
The Darboux transformations on band matrices were studied in \cite{Adler_2000,Rolania_2015}, and then enable us to relate the \yadd{Kostant-Toda} equation to the discrete Korteweg de Vries equation which is equivalent to the hLV system \cite{Rolania_2019}. 
See also \cite{Rolania_2009} for other extensions of the Toda equation \eqref{eqn:Toda} and LV system \eqref{eqn:LV}.
A time-discretization of the LV (dLV) system
\begin{align}
\label{eqn:dLV}
\left\{\begin{aligned}
& u_k^{(n+1)}=u_k^{(n)}\dfrac{1+u_{k+1}^{(n)}}{1+u_{k-1}^{(n+1)}},\quad k=1,2,\ldots,2m-1, \\
& u_0^{(n)}\coloneqq 0,\quad u_{2m}^{(n)}\coloneqq 0. 
\end{aligned}\right. 
\end{align}
also has a close relationship to the discrete Toda equation. 
The dLV system \eqref{eqn:dLV} can be obtained from a variable transformation called the B\"acklund transformation or the Miura transformation of the discrete Toda equation \eqref{eqn:qd}. 
We designed an algorithm for computing singular values of bidiagonal matrices, 
which is equivalent to eigenvalues of symmetric positive-definite tridiagonal matrices based on the dLV system~\cite{Iwasaki_2002}. 
Similarly to the Toda case, the LV case has not been studied from the viewpoint of eigenvalue problems. 
In this paper, we thus focus on applications of the $q$-LV case to computing eigenvalues.
\par
The remainder of this paper is organized as follows.
In Section \ref{sec:2}, we briefly explain the $q$-Toda equation and then find Lax representations of its extension in order to associate it with eigenvalue problems. 
Next, in Section \ref{sec:3}, we consider a time-discretization of the extended $q$-Toda equation and relate it to a sequence of similarity transformations of matrices. 
In Section \ref{sec:4}, we investigate the case of the extended $q$-LV system.
In Section \ref{sec:5}, we clarify a determinantal solution to the extended $q$-discrete Toda equation and its asymptotic behavior as discrete-time goes to infinity. 
We also present numerical examples for confirming the convergence to matrix eigenvalues.
Finally, in Section \ref{sec:6}, we give concluding remarks.
%
% ---------------------------------------------------------------------
\section{Extended $q$-analogue of the Toda equation}
\label{sec:2}
% ---------------------------------------------------------------------
%
In this section, we first consider an extension of the $q$-analogue of the Toda equation \eqref{eqn:Toda}.
Next, by considering two kinds of Lax representations, 
we associate the extended $q$-Toda equation with matrix eigenvalue problems. 
\par
%%%%%%%%%%%%%%%%%%%%%%%%%%%%%%%%%
%q-analogue of the Toda equation
The $q$-derivative of the function $f(t)$ 
is a $q$-analogue of the ordinary derivative $df(t)/dt$ defined by:
\begin{align}
\label{eqn:q_derivative}
D_qf(t)\coloneqq \left\{
\begin{aligned}
& \dfrac{f(t)-f(qt)}{(1-q)t},\quad t\neq 0,\\
& \lim_{n\to\infty}\dfrac{f(q^nt)-f(0)}{q^nt},\quad t=0, 
\end{aligned}\right.
\end{align}
where $0<q<1$. Area et al. \cite{Area_2018} presented a $q$-analogue of the Toda equation: 
\begin{align}
\label{eqn:qToda}
\left\{\begin{aligned}
& D_qx_k(t)=g_k(qt)-g_{k-1}(qt),\quad k=1,2,\dots,m,\\ 
& D_qy_k(t)=g_k(qt)\left(x_{k+1}(qt)-x_k(t)\right),\quad k=1,2,\dots,m-1,\\
& y_0(t)\coloneqq 0,\quad y_m(t)\coloneqq 0, 
\end{aligned}\right.
\end{align}
where $g_k(qt)$ are auxiliary variables given by:
\begin{align}
\label{eqn:qToda_g}
\left\{\begin{aligned}
& g_1(qt)=\dfrac{y_1(qt)}{1+(1-q)tx_1(qt)},\\
& g_k(qt)=\dfrac{y_k(qt)}{y_{k-1}(t)}g_{k-1}(qt),\quad k=2,3,\dots,m-1,\\
& g_0(qt)\coloneqq 0,\quad g_m(qt)\coloneqq 0. 
\end{aligned}\right.
\end{align}
Since $D_qx_k(t)\to dx_k(t)/dt$, $D_qy_k(t)\to dy_k(t)/dt$ and $g_k(qt)\to y_k(t) $ as $q \to 1$, 
the $q$-Toda equation \eqref{eqn:qToda} with $q\to 1$ leads to the Toda equation \eqref{eqn:Toda}. 
\par
Now we introduce an arbitrary integer $M$ into the $q$-Toda equation \eqref{eqn:qToda} as
\begin{align}
\label{eqn:extended_qToda}
\left\{\begin{aligned}
& D_qx_k(t)=g_k(qt)-g_{k-M-1}(qt),\quad k=1,2,\dots,M_m+M,\\ 
& D_qy_k(t)=g_k(qt)\left(x_{k+M+1}(qt)-x_k(t)\right),\quad k=1,2,\dots,M_m-1,\\
& y_{-M}(t)\coloneqq 0,\quad y_{-M+1}(t)\coloneqq 0,\quad\dots,\quad y_0(t)\coloneqq 0,\\
& y_{M_m}(t)\coloneqq 0,\quad y_{M_m+1}(t)\coloneqq 0,\quad\dots,\quad y_{M_m+M}(t)\coloneqq 0, 
\end{aligned}\right.
\end{align}
where $M_i\coloneqq (M+1)i-M$ and $g_k(qt)$ satisfy 
\begin{align}
\label{eqn:extended_qToda_g}
\left\{\begin{aligned}
& g_k(qt)=\dfrac{y_k(qt)}{1+(1-q)tx_k(qt)},\quad k=1,2,\dots,M+1,\\
& g_k(qt)=\dfrac{y_k(qt)}{y_{k-M-1}(t)}g_{k-M-1}(qt),\quad k=M+2,M+3,\dots,M_m-1,\\
& g_{-M}(qt)\coloneqq 0,\quad g_{-M+1}(qt)\coloneqq 0,\quad\dots,\quad g_0(qt)\coloneqq 0,\\ 
& g_{M_m}(qt)\coloneqq 0,\quad g_{M_m+1}(qt)\coloneqq 0,\quad\dots,\quad g_{M_m+M}(qt)\coloneqq 0.
\end{aligned}\right.
\end{align}
Equation \eqref{eqn:extended_qToda} with $M=0$ is simply the $q$-Toda equation \eqref{eqn:qToda}. 
Thus, we hereinafter refer to \eqref{eqn:extended_qToda} as the extended $q$-Toda equation. 
Similarly to the simple Toda equation \eqref{eqn:Toda}, we can regard that the extended $q$-Toda equation \eqref{eqn:extended_qToda} describes a system of nonlinear springs. The extended $q$-Toda case however differs from the Toda equation \eqref{eqn:Toda} in that the $k$th point-mass links with the $(k-M-1)$st and $(k+M+1)$st point-masses rather than the $(k-1)$st and $(k+1)$st point-masses.
\par
We prepare $(M_m+M)\times(M_m+M)$ matrices involving the extended $q$-Toda variables 
$x_k(t)$, $y_k(t)$, and $g_k(t)$ as:
\[\begin{aligned}
& A(t)\coloneqq\left(\begin{array}{ccccc}
x_1(t) & \overbrace{0\ \cdots\ 0}^M & 1 & & \\
\begin{array}{c} 0 \\ \vdots \\ 0 \end{array}
& \ddots & \ddots & \ddots & \\
y_1(t) & \ddots & & \ddots & 1 \\
 & \ddots & \ddots & \ddots & 
\begin{array}{c} 0 \\ \vdots \\ 0 \end{array} \\
 & & y_{M_m-1}(t) & \underbrace{0\ \cdots\ 0}_M & x_{M_m+M}(t) 
\end{array}\right),\\
& G(t)\coloneqq\left(\begin{array}{ccccc}
0 & & \\\begin{array}{c} 0 \\ \vdots \\ 0 \end{array} 
 & \ddots & & \\ 
g_1(t) & \ddots & \ddots & \\
 & \ddots & \ddots & \ddots \\
 & & g_{M_m-1}(t) & \underbrace{0\ \cdots\ 0}_M & 0
\end{array}\right). 
\end{aligned}\]
We hereinafter call $A(t)$ and $G(t)$ the $(M+1)$-tridiagonal and $(M+1)$-subdiagonal matrices, respectively. 
The following proposition then gives the Lax representation 
for the extended $q$-Toda equation \eqref{eqn:extended_qToda}. 
\begin{proposition}
\label{prop:Lax_extended_qToda}
For the $(M+1)$-tridiagonal matrices $A(t)$ and the $(M+1)$-subdiagonal matrices $G(t)$, it holds that
\begin{align}
\label{eqn:Lax_extended_qToda}
D_qA(t)=A(qt)G(qt)-G(qt)A(t). 
\end{align}
\end{proposition}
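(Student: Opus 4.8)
The plan is to prove \eqref{eqn:Lax_extended_qToda} by a direct entrywise comparison of the two sides, reducing the matrix identity to the scalar relations \eqref{eqn:extended_qToda}--\eqref{eqn:extended_qToda_g}. First I would record the band structure. Writing entries with indices $(i,j)$, the matrix $A(t)$ has $[A(t)]_{k,k}=x_k(t)$ on the main diagonal, $[A(t)]_{k,k+M+1}=1$ on the $(M+1)$st superdiagonal, and $[A(t)]_{k+M+1,k}=y_k(t)$ on the $(M+1)$st subdiagonal, all other entries being zero, while $G(t)$ carries only $[G(t)]_{k+M+1,k}=g_k(t)$. Because the $q$-derivative \eqref{eqn:q_derivative} acts entrywise and annihilates the constant entries $1$, the left-hand side $D_qA(t)$ is supported exactly on the main diagonal, with entries $D_qx_k(t)$, and on the $(M+1)$st subdiagonal, with entries $D_qy_k(t)$.

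Next I would compute the two products on the right-hand side. Since $G$ occupies a single band, each matrix product selects a single nonzero term, and a short index count shows that $A(qt)G(qt)$ and $G(qt)A(t)$ are both supported on the main diagonal, the $(M+1)$st subdiagonal, and the $2(M+1)$st subdiagonal. Concretely, on the main diagonal one finds $[A(qt)G(qt)]_{k,k}=g_k(qt)$ and $[G(qt)A(t)]_{k,k}=g_{k-M-1}(qt)$; on the $(M+1)$st subdiagonal one finds $[A(qt)G(qt)]_{k+M+1,k}=x_{k+M+1}(qt)g_k(qt)$ and $[G(qt)A(t)]_{k+M+1,k}=x_k(t)g_k(qt)$; and on the $2(M+1)$st subdiagonal one finds $[A(qt)G(qt)]_{k+2M+2,k}=y_{k+M+1}(qt)g_k(qt)$ and $[G(qt)A(t)]_{k+2M+2,k}=y_k(t)g_{k+M+1}(qt)$.

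Then I would match the two sides band by band. The main diagonal reproduces $D_qx_k(t)=g_k(qt)-g_{k-M-1}(qt)$, which is the first line of \eqref{eqn:extended_qToda}; the $(M+1)$st subdiagonal reproduces $D_qy_k(t)=g_k(qt)\bigl(x_{k+M+1}(qt)-x_k(t)\bigr)$, which is the second line. The $(M+1)$st superdiagonal is trivial, since the left-hand side vanishes because $D_q1=0$ and the right-hand side has no entry there. The one genuinely new requirement comes from the $2(M+1)$st subdiagonal, where the left-hand side vanishes while the right-hand side equals $y_{k+M+1}(qt)g_k(qt)-y_k(t)g_{k+M+1}(qt)$; this band must therefore be zero, and that is exactly the auxiliary recurrence in \eqref{eqn:extended_qToda_g} (with $k$ replaced by $k+M+1$), namely $g_{k+M+1}(qt)=\bigl(y_{k+M+1}(qt)/y_k(t)\bigr)g_k(qt)$.

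I expect the main obstacle to be the index bookkeeping at the boundaries rather than the interior identities. After establishing the generic band equations above, I would verify that the first and last several rows and columns are consistent with the prescribed edge conventions $y_{-M}(t)=\cdots=y_0(t)=0$ and $y_{M_m}(t)=\cdots=y_{M_m+M}(t)=0$, together with the matching conventions $g_{-M}(qt)=\cdots=g_0(qt)=0$ and $g_{M_m}(qt)=\cdots=g_{M_m+M}(qt)=0$, so that no spurious terms survive at the edges and the relations \eqref{eqn:extended_qToda}--\eqref{eqn:extended_qToda_g} are invoked only where they are valid. Conceptually, the content of the proposition is that the auxiliary variables $g_k$ are defined precisely so as to annihilate the would-be $2(M+1)$st subdiagonal band, and this is what forces the right-hand side of \eqref{eqn:Lax_extended_qToda} to be $(M+1)$-tridiagonal and to coincide with $D_qA(t)$.
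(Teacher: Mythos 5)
Your proposal is correct and follows essentially the same route as the paper's own proof: an entrywise (band-by-band) comparison in which the $(k,k)$ and $(k+M+1,k)$ entries reduce to the two lines of \eqref{eqn:extended_qToda} and the $(k+2M+2,k)$ entries vanish by the recurrence in \eqref{eqn:extended_qToda_g}. Your extra care about the boundary conventions is a welcome refinement of a point the paper leaves implicit, but it does not change the argument.
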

\begin{proof}
Observing the $(k,k)$, $(k+M+1,k)$ and $(k+2M+2,k)$ entries of $A(qt)G(qt)-G(qt)A(t)$, we easily derive 
$g_k(qt)-g_{k-M-1}(qt)$, $g_k(qt)x_{k+M+1}(qt)-g_k(qt)x_k(t)$, and 
$g_k(qt)y_{k+M+1}(qt)-g_{k+M+1}(qt)y_k(t)$, respectively. 
The other entries of $A(qt)G(qt)$ and $G(qt)A(t)$ are all $0$. 
From the first and second equations of the extended $q$-Toda equation \eqref{eqn:extended_qToda}, 
it is obvious that $g_k(qt)-g_{k-M-1}(qt)=D_qx_k(t)$ and $g_k(qt)x_{k+M+1}(qt)-g_k(qt)x_k(t)=D_qy_k(t)$. 
Moreover, from the second equation of \eqref{eqn:extended_qToda_g}, 
we see that $g_k(qt)y_{k+M+1}(qt)-g_{k+M+1}(qt)y_k(t)=0$. 
Taking into account that the $(k+2M+2,k)$ entries of $D_qA(t)$ are all 0, 
we thus have \eqref{eqn:Lax_extended_qToda}.
\end{proof}
For convenience, we here replace $x_{M_k+s}(t)$, $y_{M_k+s}(t)$, and $g_{M_k+s}(qt)$ 
with $x_{k,s}(t)$, $y_{k,s}(t)$, and $g_{k,s}(qt)$, respectively. 
We also obtain another Lax representation for the extended $q$-Toda equation \eqref{eqn:extended_qToda} in terms of $m\times m$ tridiagonal and $1$-subdiagonal matrices:
\[\begin{aligned}
& {\cal A}_s(t)\coloneqq\left(\begin{array}{cccc}
x_{1,s}(t) & 1 & & \\
y_{1,s}(t) & x_{2,s}(t) & \ddots & \\
 & \ddots & \ddots & 1 \\
 & & y_{m-1,s}(t) & x_{m,s}(t)
 \end{array} \right), \\
& {\cal G}_{s}(t)\coloneqq\left(\begin{array}{ccccc}
0 & & \\
g_{1,s}(t) & \ddots & \\
 & \ddots & \ddots \\
 & & g_{m-1,s}(t) & 0\ 
\end{array} \right). 
\end{aligned}\]
\begin{proposition}
\label{prop:Lax_extended_qToda_1}
For the block diagonal matrices
$ {\cal A}(t)\coloneqq {\rm diag}({\cal A}_0(t),{\cal A}_1(t),\dots$, ${\cal A}_M(t))$ and
$ {\cal G}(t)\coloneqq {\rm diag}({\cal G}_0(t),{\cal G}_1(t),\dots,{\cal G}_M(t))$, it holds that
\begin{align}
\label{eqn:Lax_extended_qToda_1_all}
D_q{\cal A}(t)={\cal A}(qt){\cal G}(qt)-{\cal G}(qt){\cal A}(t). 
\end{align}
\end{proposition}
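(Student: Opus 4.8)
The plan is to obtain this second Lax representation from the first one in Proposition~\ref{prop:Lax_extended_qToda} by a permutation similarity, so that no fresh entrywise computation is needed. The relabelling $x_{M_k+s}(t)\mapsto x_{k,s}(t)$, $y_{M_k+s}(t)\mapsto y_{k,s}(t)$, $g_{M_k+s}(qt)\mapsto g_{k,s}(qt)$ simply regroups the index set $\{1,2,\dots,(M+1)m\}$ along the arithmetic progressions $j=M_k+s=(M+1)k-M+s$ with $k=1,2,\dots,m$, one progression for each $s=0,1,\dots,M$, and these $M+1$ progressions partition the whole index set. First I would introduce the permutation matrix $P$ that sends the new position indexed by the pair $(s,k)$ to the old index $M_k+s$; since consecutive members of a fixed progression differ by exactly $M+1$, the effect of $P$ is to collect each residue class modulo $M+1$ into a single contiguous block.

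The key step is then the claim that $P$ block-diagonalizes both matrices,
\begin{align}
\label{eqn:block_diag_claim}
PA(t)P^{-1}={\cal A}(t),\qquad PG(t)P^{-1}={\cal G}(t).
\end{align}
To justify this I would use that $A(t)$ has nonzero entries only on its main diagonal, with $A_{j,j}=x_j(t)$, and on its $\pm(M+1)$st diagonals, with $A_{j+M+1,j}=y_j(t)$ and $A_{j,j+M+1}=1$, while $G(t)$ has nonzero entries only on its $(M+1)$st subdiagonal, with $G_{j+M+1,j}=g_j(qt)$. Because every off-diagonal entry links two indices differing by precisely $M+1$, it never leaves a single progression; under the relabelling $j=M_k+s$ such an entry becomes the ordinary sub- or super-diagonal entry $y_{k,s}(t)$, $1$, or $g_{k,s}(qt)$ sitting between positions $k$ and $k+1$ of the $s$th block, which is exactly an entry of ${\cal A}_s(t)$ or ${\cal G}_s(t)$. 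Hence the conjugates in \eqref{eqn:block_diag_claim} are block diagonal with the stated blocks.

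With \eqref{eqn:block_diag_claim} in hand the conclusion is immediate. Since $P$ is a constant matrix, the $q$-derivative acts entrywise and commutes with conjugation by $P$, so differentiating ${\cal A}(t)=PA(t)P^{-1}$ and invoking Proposition~\ref{prop:Lax_extended_qToda} gives
\begin{align}
D_q{\cal A}(t)=P\bigl(D_qA(t)\bigr)P^{-1}=P\bigl(A(qt)G(qt)-G(qt)A(t)\bigr)P^{-1}.
\end{align}
Inserting $P^{-1}P=I$ between the two factors of each product and using \eqref{eqn:block_diag_claim} once more turns the right-hand side into ${\cal A}(qt){\cal G}(qt)-{\cal G}(qt){\cal A}(t)$, which is \eqref{eqn:Lax_extended_qToda_1_all}.

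I expect the main obstacle to be the careful verification that no off-diagonal entry of $A(t)$ or $G(t)$ actually crosses a block boundary, i.e.\ that the progressions genuinely decouple. This reduces to index arithmetic at the two ends of each block: the would-be link out of the last entry $M_m+s$ lands at $M_m+s+(M+1)=(M+1)m+s+1$, which lies outside the matrix, and the would-be link into the first entry $M_1+s$ comes from $M_1+s-(M+1)=s-M\le 0$, again out of range. Consequently the vanishing boundary data $y_{-M}=\dots=y_0=0$, $y_{M_m}=\dots=0$ and $g_{-M}=\dots=0$, $g_{M_m}=\dots=0$ are automatically consistent with the block structure, and once this is checked the permutation argument transports \eqref{eqn:Lax_extended_qToda} to \eqref{eqn:Lax_extended_qToda_1_all} without further calculation.
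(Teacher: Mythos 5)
Your argument is correct, but it reaches \eqref{eqn:Lax_extended_qToda_1_all} by a genuinely different route from the paper's. The paper never constructs a permutation matrix in this proof: it observes that both sides of \eqref{eqn:Lax_extended_qToda_1_all} are block diagonal, reduces the claim to the $m\times m$ tridiagonal identities $D_q{\cal A}_s(t)={\cal A}_s(qt){\cal G}_s(qt)-{\cal G}_s(qt){\cal A}_s(t)$ for each $s$, and then applies Proposition~\ref{prop:Lax_extended_qToda} in its $M=0$ form to convert each of these into the scalar equations for $x_{k,s}$, $y_{k,s}$, $g_{k,s}$, which are recognized as the extended $q$-Toda equation \eqref{eqn:extended_qToda} under the relabelling $x_{k,s}=x_{M_k+s}$, etc. You instead invoke Proposition~\ref{prop:Lax_extended_qToda} for the given $M$ and transport the full $(M_m+M)\times(M_m+M)$ Lax equation by the permutation similarity $PA(t)P^{-1}={\cal A}(t)$, $PG(t)P^{-1}={\cal G}(t)$, using that $D_q$ commutes with conjugation by a constant matrix. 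Both arguments rest on the same combinatorial fact --- the off-diagonal couplings of $A(t)$ and $G(t)$ only join indices in the same residue class modulo $M+1$, so the classes decouple --- and your end-of-block index check is precisely the point that needs care; you have done it correctly. What the paper's route buys is that it exhibits the blockwise identities \eqref{eqn:Lax_extended_qToda_1} explicitly, which are reused immediately afterwards to form the Lax pair \eqref{eqn:Lax_pair_qToda_1}; what yours buys is a shorter deduction once the conjugation identities are verified, and it makes the spectral equivalence of $A(t)$ and ${\cal A}(t)$ (which the paper only obtains later by citing Nagata et al.\ for the existence of such a $P$) an immediate by-product rather than a separate appeal to the literature.
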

\begin{proof}
Obviously, $D_q{\cal A}(t)$ and ${\cal A}(qt){\cal G}(qt)-{\cal G}(qt){\cal A}(t)$ are both block diagonal matrices. 
Observing the $s$th diagonal blocks of $D_q{\cal A}(t)$ and ${\cal A}(qt){\cal G}(qt)-{\cal G}(qt){\cal A}(t)$, we derive
\begin{align}
\label{eqn:Lax_extended_qToda_1}
D_q{\cal A}_s(t)={\cal A}_s(qt){\cal G}_s(qt)-{\cal G}_s(qt){\cal A}_s(t),\quad s=0,1,\dots,M.
\end{align}
It is worth noting here that ${\cal A}_s(t)$ and ${\cal G}_s(t)$ have the same form as $A(t)$ and $G(t)$, respectively. 
Thus, from Proposition \ref{prop:Lax_extended_qToda}, 
we immediately obtain equalities of nonzero entries in \eqref{eqn:Lax_extended_qToda_1}:
\[\left\{\begin{aligned}
& D_qx_{k,s}(t)=g_{k,s}(qt)-g_{k-1,s}(qt),\quad k=1,2,\dots,m,\quad s=0,1,\dots,M,\\ 
& D_q y_{k,s}(t)= g_{k,s}(qt)\left(x_{k+1,s}(qt)-x_{k,s}(t)\right),\quad k=1,2,\dots,m-1,\quad s=0,1,\dots,M, 
\end{aligned}\right.\]
which is equivalent to the extended $q$-Toda equation \eqref{eqn:extended_qToda} under the replacements
$x_{k,s}(t)=x_{M_k+s}(t)$, $y_{k,s}(t)=y_{M_k+s}(t)$, and $g_{k,s}(t)=g_{M_k+s}(t)$. 
Therefore we conclude that \eqref{eqn:Lax_extended_qToda_1_all} is also 
the Lax representation of the extended $q$-Toda equation \eqref{eqn:extended_qToda}. 
\end{proof}
We here denote an eigenvalue of the $(M+1)$-tridiagonal matrix $A(t)$ by $\lambda_k$ 
and the corresponding eigenvector by $\Phi_k(t)$. 
From the $q$-derivative of the equality $A(t)\Phi_k(t)=\lambda_k\Phi_k(t)$, we then obtain:
\[D_qA(t)\cdot\Phi_k(t)+A(qt)\cdot D_q\Phi_k(t)=\lambda_kD_q\Phi_k(t).\]
Assuming that $D_q\Phi_k(t)=-G(qt)\Phi_k(t)$, 
we derive the Lax representation \eqref{eqn:Lax_extended_qToda} again. 
As the so-called Lax pair for the extended $q$-Toda equation \eqref{eqn:extended_qToda}, we thus have:
\begin{align}
\label{eqn:Lax_pair_qToda}
\begin{cases}
A(t)\Phi_k(t)=\lambda_k\Phi_k(t),\quad k=1,2,\dots,M_m+M,\\
D_q\Phi_k(t)=-G(qt)\Phi_k(t),\quad k=1,2,\dots,M_m+M. 
\end{cases}
\end{align}
Similarly, we obtain the Lax pair associated with the Lax representation \eqref{eqn:Lax_extended_qToda_1}:
\begin{align}
\label{eqn:Lax_pair_qToda_1}
\begin{cases}
{\cal A}_s(t)\Psi_{k,s}(t)=\sigma_{k,s}\Psi_{k,s}(t),\quad k =1,2,\dots,m,\\
D_q\Psi_{k,s}(t)=-{\cal G}_s(qt)\Psi_{k,s}(t),\quad k=1,2,\dots,m,
\end{cases}
\end{align}
where $(\sigma_{k,s},\Psi_{k,s}(t))$ denote eigenpairs of the block diagonal matrices ${\cal A}_s(t)$.
According to Nagata et al.~\cite{Nagata_2016}, 
for the $(M+1)$-tridiagonal matrices $A(t)$ and the block diagonal matrices $\Psi(t)$, 
there exists a permutation matrix $P$ such that $P^{\top}A(t)P=\Psi(t)$. 
This implies that eigenvalues of $A(t)$ and $\Psi(t)$ are equal to each other.
Namely, every $\sigma_{k,s}$ coincides with one of $\lambda_1,\lambda_2,\dots,\lambda_{M_m+M}$.
%
% ------------------------------------------------------------------------
\section{$q$-discretization and similarity transformations}
\label{sec:3}
% ------------------------------------------------------------------------
In this section, we present a time-discretization of the extended $q$-Toda equation \eqref{eqn:extended_qToda}, 
and then relate it to similarity transformations of $M$-tridiagonal matrices and block diagonal matrices.
\par
We introduce new variables $t^{(0)},t^{(1)},\dots$ sequentially given by $t^{(n)}\coloneqq q^{-1} t^{(n-1)}$, where $t^{(0)}>0$. 
The extended $q$-Toda equation \eqref{eqn:extended_qToda} with $t=t^{(n)}$ then leads to:
\begin{align}
\label{eqn:extended_qdToda}
\left\{\begin{aligned}
& \dfrac{x_{k,s}^{(n+1)}-x_{k,s}^{(n)}}{(1-q)t^{(n)}}=g_{k,s}^{(n)}-g_{k-1,s}^{(n)},\quad k=1,2,\dots,m,\quad s=0,1,\dots,M,\\ 
& \dfrac{y_{k,s}^{(n+1)}-y_{k,s}^{(n)}}{(1-q)t^{(n)}}=g_{k,s}^{(n)}\left(x_{k+1,s}^{(n)}-x_{k,s}^{(n+1)}\right),\\
& \qquad k=1,2,\dots,m-1,\quad s=0,1,\dots,M,\\
& y_{0,s}^{(n)}\coloneqq 0,\quad y_{m,s}^{(n)}\coloneqq 0, 
\end{aligned}\right.
\end{align}
where $x_{k,s}^{(n+1)}\coloneqq x_{k,s}(t^{(n)})$, $y_{k,s}^{(n+1)}\coloneqq y_{k,s}(t^{(n)})$,
$g_{k,s}^{(n+1)}\coloneqq g_{k,s}(t^{(n)})$, and $g_{k,s}^{(n)}$ satisfy
\begin{align}
\label{eqn:extended_qdToda_g}
\left\{\begin{aligned}
& g_{1,s}^{(n)}=\dfrac{y_{1,s}^{(n)}}{1+(1-q)t^{(n)}x_{1,s}^{(n)}},\quad s=0,1,\dots,M,\\
& g_{k,s}^{(n)}=\dfrac{y_{k,s}^{(n)}}{y_{k-1,s}^{(n+1)}}g_{k-1,s}^{(n)},
\quad k=2,3,\dots,m-1,\quad s=0,1,\dots,M,\\
& g_{0,s}^{(n)}\coloneqq 0,\quad g_{m,s}^{(n)}\coloneqq 0.
\end{aligned}\right.
\end{align}
We hereinafter refer to \eqref{eqn:extended_qdToda} as the extended $q$-discrete Toda equation.
It is emphasized here that the values of $x_{k,s}^{(n+1)}$, $y_{k,s}^{(n+1)}$ and $g_{k,s}^{(n+1)}$ 
are uniquely computed from those of $x_{k,s}^{(n)}$, $y_{k,s}^{(n)}$ and $g_{k,s}^{(n)}$ 
in the extended $q$-discrete Toda equation \eqref{eqn:extended_qdToda}. 
Figure \ref{fig:diagram_time_evolution} shows the discrete-time evolution from $n$ to $n+1$ 
in the extended $q$-discrete Toda equation \eqref{eqn:extended_qdToda}.
\begin{figure}[tb]
\centering
\includegraphics[width=0.7\textwidth]{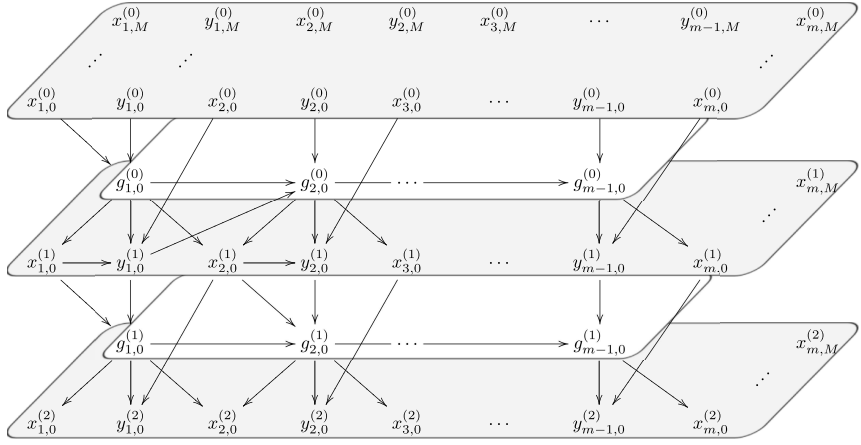}
\caption{Discrete-time evolution in the extended $q$-discrete Toda equation \eqref{eqn:extended_qdToda}. \label{fig:diagram_time_evolution}}
\end{figure}
\par
By setting $A^{(n+1)}\coloneqq A(t^{(n)}) $ and $G^{(n+1)}\coloneqq G(t^{(n)})$ 
in the Lax representation \eqref{eqn:Lax_extended_qToda}, 
we easily derive a Lax representation for the extended $q$-discrete Toda equation \eqref{eqn:extended_qdToda}.
\begin{theorem}
\label{thm:similar_transformation_qdToda}
For the $(M+1)$-tridiagonal matrices $A^{(n)}$ and the $(M+1)$-subdiagonal matrices $G^{(n)}$, it holds that
\begin{align*}
(I_{M_m+M}+(1-q)t^{(n)}G^{(n)})A^{(n+1)}=A^{(n)}(I_{M_m+M}+(1-q)t^{(n)}G^{(n)}), 
\end{align*}
where $I_{M_m+M}$ denotes the $(M_m+M)\times(M_m+M)$ identity matrix. 
\end{theorem}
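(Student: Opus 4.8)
The plan is to derive this discrete Lax representation directly from the continuous one in Proposition \ref{prop:Lax_extended_qToda} by evaluating \eqref{eqn:Lax_extended_qToda} at $t=t^{(n)}$ and translating the $q$-shifted arguments into the discrete-time labels. First I would record the key relation between consecutive discrete times: since $t^{(n)}\coloneqq q^{-1}t^{(n-1)}$, we have $qt^{(n)}=t^{(n-1)}$. Combined with the labeling conventions $A^{(n+1)}\coloneqq A(t^{(n)})$ and $G^{(n+1)}\coloneqq G(t^{(n)})$ --- equivalently $A^{(n)}=A(t^{(n-1)})=A(qt^{(n)})$ and $G^{(n)}=G(t^{(n-1)})=G(qt^{(n)})$ --- every factor appearing in the continuous Lax equation acquires an unambiguous discrete counterpart.

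Next I would substitute $t=t^{(n)}$ into \eqref{eqn:Lax_extended_qToda}. Expanding the left-hand side via the definition \eqref{eqn:q_derivative} of the $q$-derivative gives
\[
D_qA(t^{(n)})=\frac{A(t^{(n)})-A(qt^{(n)})}{(1-q)t^{(n)}}=\frac{A^{(n+1)}-A^{(n)}}{(1-q)t^{(n)}},
\]
while the right-hand side becomes $A(qt^{(n)})G(qt^{(n)})-G(qt^{(n)})A(t^{(n)})=A^{(n)}G^{(n)}-G^{(n)}A^{(n+1)}$. Equating the two sides and clearing the denominator $(1-q)t^{(n)}$ yields
\[
A^{(n+1)}-A^{(n)}=(1-q)t^{(n)}\left(A^{(n)}G^{(n)}-G^{(n)}A^{(n+1)}\right).
\]

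Finally I would collect the $A^{(n+1)}$ terms on the left and the $A^{(n)}$ terms on the right and factor, arriving at
\[
\bigl(I_{M_m+M}+(1-q)t^{(n)}G^{(n)}\bigr)A^{(n+1)}=A^{(n)}\bigl(I_{M_m+M}+(1-q)t^{(n)}G^{(n)}\bigr),
\]
which is precisely the claimed identity. There is no genuine analytic obstacle; once the substitution is made the computation is purely algebraic. The single point demanding care is the bookkeeping of the time shift: one must keep track that the $q$-shifted arguments $A(qt)$ and $G(qt)$ correspond to the \emph{earlier} discrete index $n$, whereas the unshifted $A(t)$ corresponds to $n+1$. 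This is exactly what makes the factor $I_{M_m+M}+(1-q)t^{(n)}G^{(n)}$ appear \emph{identically} on both sides, so that the relation is genuinely a similarity transformation conjugating $A^{(n+1)}$ into $A^{(n)}$ (provided this factor is invertible, which holds since $G^{(n)}$ is strictly lower triangular and hence $I_{M_m+M}+(1-q)t^{(n)}G^{(n)}$ is unipotent).
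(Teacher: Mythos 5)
Your proof is correct and follows exactly the route the paper intends: the paper itself only remarks that the theorem is ``easily derived'' by setting $A^{(n+1)}\coloneqq A(t^{(n)})$, $G^{(n+1)}\coloneqq G(t^{(n)})$ in the continuous Lax representation, and your substitution of $t=t^{(n)}$, expansion of the $q$-derivative, and algebraic rearrangement fills in precisely those omitted steps. The bookkeeping of the shift ($qt^{(n)}=t^{(n-1)}$, so $q$-shifted arguments carry index $n$) is handled correctly.
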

\noindent
Since $\det(I_{M_m+M}+(1-q)t^{(n)}G^{(n)})=1$, the inverse of $I_{M_m+M}+(1-q)t^{(n)}G^{(n)}$ always exist.
Theorem \ref{thm:similar_transformation_qdToda} thus implies that 
the extended $q$-discrete Toda equation \eqref{eqn:extended_qdToda} 
generates a sequence of similarity transformations of the initial $A^{(0)}$. 
\par
Now we prepare auxiliary variables given using $x_{k,s}^{(n)}$, $y_{k,s}^{(n)}$, $g_{k,s}^{(n)}$, $t^{(n)}$, and $q$ as
\begin{align}
\label{eqn:auxiliary_variable_qdToda}
\left\{\begin{aligned}
& Q_{k,s}^{(n)}\coloneqq x_{k,s}^{(n)}+\dfrac{1}{(1-q)t^{(n)}}-(1-q)t^{(n)}g_{k-1,s}^{(n)},\\
& \qquad k=1,2,\dots,m,\quad s=0,1,\dots,M,\\
& E_{k,s}^{(n)}\coloneqq (1-q)t^{(n)}g_{k,s}^{(n)},\quad k=1,2,\dots,m,\quad s=0,1,\dots,M, 
\end{aligned}\right.
\end{align}
where $E_{0,s}^{(n)}\coloneqq 0$ and $E_{m,s}^{(n)}\coloneqq 0 $. 
The following proposition then describes discrete-time evolution from $n$ to $n+1$ of $Q_{k,s}^{(n)}$ and $E_{k,s}^{(n)}$. 
\begin{proposition}
\label{prop:auxiliary_recursion_formula_qdToda}
For $s=0,1,\dots,M$, it holds that
\begin{align}
\label{eqn:auxiliary_recursion_formula_qdToda}
\left\{\begin{aligned}
& Q_{k,s}^{(n+1)}+E_{k-1,s}^{(n+1)}-\dfrac{1}{(1-q)t^{(n+1)}}=Q_{k,s}^{(n)}+E_{k,s}^{(n)}-\dfrac{1}{(1-q)t^{(n)}},
 \quad k=1,2,\dots,m,\\
& Q_{k,s}^{(n+1)}E_{k,s}^{(n+1)}=Q_{k+1,s}^{(n)}E_{k,s}^{(n)},\quad k=1,2,\dots,m-1.
\end{aligned}\right.
\end{align}
\end{proposition}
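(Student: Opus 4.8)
The plan is to recognize $Q_{k,s}^{(n)}$ and $E_{k,s}^{(n)}$ as the entries of an $LR$ factorization of a shifted tridiagonal matrix, and then to read off the two recursions from the $LR\to RL$ exchange encoded in the block form of the similarity transformation of Theorem \ref{thm:similar_transformation_qdToda}. Throughout I fix $s$, abbreviate $\tau^{(n)}\coloneqq(1-q)t^{(n)}$, and write $L_s^{(n)}\coloneqq I+\tau^{(n)}{\cal G}_s^{(n)}$ (lower bidiagonal, unit diagonal, subdiagonal entries $E_{k,s}^{(n)}$) and $R_s^{(n)}$ for the upper bidiagonal matrix with diagonal $Q_{k,s}^{(n)}$ and unit superdiagonal. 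Restricting the Lax representation \eqref{eqn:Lax_extended_qToda_1} to each diagonal block, exactly as Theorem \ref{thm:similar_transformation_qdToda} is obtained from \eqref{eqn:Lax_extended_qToda}, yields the block similarity $L_s^{(n)}{\cal A}_s^{(n+1)}={\cal A}_s^{(n)}L_s^{(n)}$.

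The key identity, and the step I expect to be the main obstacle, is
\[
E_{k,s}^{(n)}Q_{k,s}^{(n)}=y_{k,s}^{(n)},\qquad Q_{k+1,s}^{(n)}E_{k,s}^{(n)}=y_{k,s}^{(n+1)},
\]
the first of which says precisely that ${\cal A}_s^{(n)}+(\tau^{(n)})^{-1}I=L_s^{(n)}R_s^{(n)}$ (its diagonal and superdiagonal entries match by the definition of $Q_{k,s}^{(n)}$, and the subdiagonal entries are this first equality). I would prove it by induction on $k$: the case $k=1$ is immediate from $g_{1,s}^{(n)}(1+\tau^{(n)}x_{1,s}^{(n)})=y_{1,s}^{(n)}$ in \eqref{eqn:extended_qdToda_g}, while for the inductive step I substitute the $g$-recursion $g_{k,s}^{(n)}y_{k-1,s}^{(n+1)}=y_{k,s}^{(n)}g_{k-1,s}^{(n)}$ to reduce $E_{k,s}^{(n)}Q_{k,s}^{(n)}=y_{k,s}^{(n)}$ to the companion relation $y_{k-1,s}^{(n+1)}=g_{k-1,s}^{(n)}(1+\tau^{(n)}x_{k,s}^{(n)}-(\tau^{(n)})^2 g_{k-1,s}^{(n)})$, and then verify this relation by inserting the two evolution equations of \eqref{eqn:extended_qdToda} (for $x_{k-1,s}^{(n+1)}$ and $y_{k-1,s}^{(n+1)}$) together with the inductive hypothesis at index $k-1$. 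The delicate point is that the $g$-recursion mixes the times $n$ and $n+1$, so the induction must carry along this mixed-time companion identity, which is exactly the second displayed equality.

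With the factorization in hand the rest is bookkeeping. Conjugating ${\cal A}_s^{(n)}=L_s^{(n)}R_s^{(n)}-(\tau^{(n)})^{-1}I$ by $L_s^{(n)}$ and using the block similarity gives ${\cal A}_s^{(n+1)}+(\tau^{(n)})^{-1}I=R_s^{(n)}L_s^{(n)}$, i.e.\ the new matrix is the reversed product. Applying the factorization at time $n+1$ as well yields
\[
L_s^{(n+1)}R_s^{(n+1)}={\cal A}_s^{(n+1)}+(\tau^{(n+1)})^{-1}I=R_s^{(n)}L_s^{(n)}+\left((\tau^{(n+1)})^{-1}-(\tau^{(n)})^{-1}\right)I.
\]
Comparing the $(k,k)$ entries of the two sides (the left side has diagonal $Q_{k,s}^{(n+1)}+E_{k-1,s}^{(n+1)}$, the right side $Q_{k,s}^{(n)}+E_{k,s}^{(n)}$ plus the scalar shift) gives the first recursion of \eqref{eqn:auxiliary_recursion_formula_qdToda}, and comparing the $(k+1,k)$ entries, namely $E_{k,s}^{(n+1)}Q_{k,s}^{(n+1)}$ against $Q_{k+1,s}^{(n)}E_{k,s}^{(n)}$, gives the second; the conventions $E_{0,s}^{(n)}=E_{m,s}^{(n)}=0$ handle the end indices.

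Finally, I would note that the first recursion also falls out directly, independently of the key identity: by the definition of the auxiliary variables the $g_{k-1,s}^{(n+1)}$ terms cancel to give $Q_{k,s}^{(n+1)}+E_{k-1,s}^{(n+1)}-(\tau^{(n+1)})^{-1}=x_{k,s}^{(n+1)}$, while $Q_{k,s}^{(n)}+E_{k,s}^{(n)}-(\tau^{(n)})^{-1}=x_{k,s}^{(n)}+\tau^{(n)}(g_{k,s}^{(n)}-g_{k-1,s}^{(n)})$, and these two expressions coincide by the first evolution equation of \eqref{eqn:extended_qdToda}. This gives a useful cross-check that both routes produce the same relation.
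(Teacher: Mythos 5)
Your proof is correct, and it reaches the result by a route that is recognizably different from the paper's, even though both arguments ultimately rest on the same pair of identities $Q_{k,s}^{(n)}E_{k,s}^{(n)}=y_{k,s}^{(n)}$ and $Q_{k+1,s}^{(n)}E_{k,s}^{(n)}=y_{k,s}^{(n+1)}$. The paper obtains these not by induction but by a telescoping (conserved-quantity) argument: dividing the second evolution equation of \eqref{eqn:extended_qdToda} by $g_{k,s}^{(n)}$ and invoking $g_{k+1,s}^{(n)}y_{k,s}^{(n+1)}=g_{k,s}^{(n)}y_{k+1,s}^{(n)}$ shows that $Q_{k,s}^{(n)}-y_{k,s}^{(n)}/E_{k,s}^{(n)}$ is independent of $k$, and the first equation of \eqref{eqn:extended_qdToda_g} makes it vanish at $k=1$; the mixed-time companion then follows from the same $g$-recursion, and the second recursion of \eqref{eqn:auxiliary_recursion_formula_qdToda} is just the chain $Q_{k,s}^{(n+1)}E_{k,s}^{(n+1)}=y_{k,s}^{(n+1)}=Q_{k+1,s}^{(n)}E_{k,s}^{(n)}$ --- no matrix factorization appears in the paper's proof of this proposition. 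The first recursion is proved there exactly as in your closing ``cross-check'': the two expressions for $Q_{k,s}^{(n)}$, one from the definition \eqref{eqn:auxiliary_variable_qdToda} and one from the first evolution equation, both equal $x_{k,s}^{(n+1)}$ up to the shift. Your induction on $k$, alternating between $E_{k,s}^{(n)}Q_{k,s}^{(n)}=y_{k,s}^{(n)}$ and its mixed-time companion, is a valid if more laborious substitute for the telescoping (both tacitly use the same genericity, e.g.\ $y_{k,s}^{(n+1)}\neq 0$, already needed for \eqref{eqn:extended_qdToda_g} to make sense), and your $L_s^{(n)}R_s^{(n)}\to R_s^{(n)}L_s^{(n)}$ packaging buys a conceptual bonus the paper defers to Theorem \ref{thm:LR_transformation}: it exhibits the proposition as the entrywise content of a shifted $LR$ step. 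The price is that you must import the blockwise version of the similarity in Theorem \ref{thm:similar_transformation_qdToda} as an extra ingredient, which the paper's self-contained entrywise proof does not need.
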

\begin{proof}
Observing the first equation in the extended $q$-discrete Toda equation \eqref{eqn:extended_qdToda}, we obtain
\begin{align}
\label{eqn:auxiliary_recursion_formula_qdToda_Q}
\left\{\begin{aligned}
& Q_{k,s}^{(n)}=x_{k,s}^{(n)}-E_{k-1,s}^{(n)}+\dfrac{1}{(1-q)t^{(n)}}, \\
& Q_{k,s}^{(n)}=x_{k,s}^{(n+1)}-E_{k,s}^{(n)}+\dfrac{1}{(1-q)t^{(n)}}.
\end{aligned}\right.
\end{align}
Equation \eqref{eqn:auxiliary_recursion_formula_qdToda_Q} immediately 
leads to the first equation of \eqref{eqn:auxiliary_recursion_formula_qdToda}.
\par
By dividing both sides of the second equation of the extended $q$-discrete Toda equation \eqref{eqn:extended_qdToda} 
by $g_{k,s}^{(n)}$, recalling that $g_{k+1,s}^{(n)}y_{k,s}^{(n+1)}=g_{k,s}^{(n)}y_{k+1,s}^{(n)}$,
and using \eqref{eqn:auxiliary_variable_qdToda}, we derive:
\begin{align}
\label{proof:E_q_dToda_qd}
Q_{k+1,s}^{(n)}-\dfrac{y_{k+1,s}^{(n)}}{E_{k+1}^{(n,s)}}=Q_{k,s}^{(n)}-\dfrac{y_{k,s}^{(n)}}{E_{k,s}^{(n)}}. 
\end{align}
From \eqref{proof:E_q_dToda_qd}, it follows that:
\[Q_{k+1,s}^{(n)}-\dfrac{y_{k+1,s}^{(n)}}{E_{k+1}^{(n,s)}}=Q_{1,s}^{(n)}-\dfrac{y_{1,s}^{(n)}}{E_1^{(n,s)}}.\]
Since it is obvious from the first equation in \eqref{eqn:extended_qdToda_g} 
and \eqref{eqn:auxiliary_variable_qdToda} with $k=1$ that $Q_{1,s}^{(n)}-y_{1,s}^{(n)}/E_1^{(n,s)}=0$, we obtain:
\[Q_{k,s}^{(n)}-\dfrac{y_{k,s}^{(n)}}{E_{k,s}^{(n)}}=0.\] 
The second equation in \eqref{eqn:extended_qdToda_g} 
and the second equation in \eqref{eqn:auxiliary_variable_qdToda} also yield 
$y_{k,s}^{(n+1)}/E_{k+1,s}^{(n)}=y_{k+1,s}^{(n)}/E_{k,s}^{(n)}$. 
Thus, we have:
\begin{align}
\label{eqn:auxiliary_recursion_formula_qdToda_E}
\left\{\begin{aligned}
& Q_{k,s}^{(n)}-\dfrac{y_{k,s}^{(n)}}{E_{k,s}^{(n)}}=0,\\
& Q_{k,s}^{(n)}-\dfrac{y_{k-1,s}^{(n+1)}}{E_{k-1,s}^{(n)}}=0. 
\end{aligned}\right.
\end{align}
Equations \eqref{eqn:auxiliary_recursion_formula_qdToda_E} therefore 
lead to the second equation of \eqref{eqn:auxiliary_recursion_formula_qdToda}.
\end{proof}
\par
The following theorem also gives another matrix representation expressed using $(M_m+M)\times(M_m+M)$ lower and upper block bidiagonal matrices:
\[L^{(n)}\coloneqq\left(\begin{array}{cccc}
I_{M+1} & & & \\
L_1^{(n)} & I_{M+1} & & \\
 & \ddots & \ddots & \\
 & & L_{m-1}^{(n)} & I_{M+1} 
\end{array}\right),\quad
R^{(n)}\coloneqq\left(\begin{array}{cccc}
R_1^{(n)} & I_{M+1} & & \\
 & R_2^{(n)} & \ddots & \\
 & & \ddots & I_{M+1} \\
 & & & R_m^{(n)}
\end{array}\right),\]
where $L_k^{(n)}\coloneqq {\rm diag}(E_{k,0}^{(n)},E_{k,1}^{(n)},\dots,E_{k,M}^{(n)})$ 
and $R_k^{(n)}\coloneqq {\rm diag}(Q_{k,0}^{(n)},Q_{k,1}^{(n)},\dots,$ $Q_{k,M}^{(n)})$. 
\begin{theorem}
\label{thm:LR_transformation}
Under discrete-time evolution from $n$ to $n+1$ in \eqref{eqn:auxiliary_recursion_formula_qdToda}, it holds that
\begin{align}
\label{eqn:LU_extended_q_dToda}
L^{(n+1)}R^{(n+1)}-\dfrac{1}{(1-q)t^{(n+1)}}I_{M_m+M}=R^{(n)}L^{(n)}-\dfrac{1}{(1-q)t^{(n)}}I_{M_m+M}. 
\end{align}
\end{theorem}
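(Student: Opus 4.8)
The plan is to prove \eqref{eqn:LU_extended_q_dToda} by direct block multiplication, thereby reducing the matrix identity to the scalar recursion \eqref{eqn:auxiliary_recursion_formula_qdToda} already supplied by Proposition \ref{prop:auxiliary_recursion_formula_qdToda}. The key structural observation is that every block $L_k^{(n)}$ and $R_k^{(n)}$ is a diagonal matrix over the index $s$; hence all blocks commute, and each block product acts entrywise in $s$. Consequently every block equation splits into $M+1$ independent scalar equations, one for each $s=0,1,\dots,M$, which is exactly the form in which Proposition \ref{prop:auxiliary_recursion_formula_qdToda} is stated.

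First I would compute $R^{(n)}L^{(n)}$. Multiplying the block-upper-bidiagonal $R^{(n)}$ by the block-lower-bidiagonal $L^{(n)}$, I expect a block-tridiagonal matrix (the $(i,i\pm2)$ blocks vanish for index reasons) with diagonal blocks $R_i^{(n)}+L_i^{(n)}$, subdiagonal blocks $R_i^{(n)}L_{i-1}^{(n)}$, and superdiagonal blocks equal to $I_{M+1}$, where I adopt the conventions $L_0^{(n)}=L_m^{(n)}=0$ consistent with the boundary values $E_{0,s}^{(n)}=E_{m,s}^{(n)}=0$. Computing $L^{(n+1)}R^{(n+1)}$ in the same way, I expect block-tridiagonal form with diagonal blocks $R_i^{(n+1)}+L_{i-1}^{(n+1)}$, subdiagonal blocks $L_{i-1}^{(n+1)}R_{i-1}^{(n+1)}$, and superdiagonal blocks again $I_{M+1}$.

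With both products in hand, I would match \eqref{eqn:LU_extended_q_dToda} block by block. The superdiagonal blocks are $I_{M+1}$ on both sides and match trivially, independently of the dynamics. Reading the diagonal blocks in the $s$-component (with $i=k$) gives $Q_{k,s}^{(n+1)}+E_{k-1,s}^{(n+1)}-1/((1-q)t^{(n+1)})=Q_{k,s}^{(n)}+E_{k,s}^{(n)}-1/((1-q)t^{(n)})$, which is exactly the first equation of \eqref{eqn:auxiliary_recursion_formula_qdToda}. Reading the subdiagonal blocks gives $E_{i-1,s}^{(n+1)}Q_{i-1,s}^{(n+1)}=Q_{i,s}^{(n)}E_{i-1,s}^{(n)}$, which, upon setting $k=i-1$, is precisely the second equation of \eqref{eqn:auxiliary_recursion_formula_qdToda}. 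Since Proposition \ref{prop:auxiliary_recursion_formula_qdToda} furnishes exactly these two families of scalar identities for every $s$, the block identity \eqref{eqn:LU_extended_q_dToda} follows at once.

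I do not anticipate any genuine obstacle beyond careful bookkeeping. The only points requiring attention are the placement of the shifted index on the subdiagonal blocks (the relabeling $k=i-1$) and the consistent use of the boundary conventions $L_0^{(n)}=L_m^{(n)}=0$ at the first and last block rows, which must agree with $E_{0,s}^{(n)}=E_{m,s}^{(n)}=0$. I therefore expect the bulk of the work to lie in the two block multiplications, after which the conclusion is immediate from Proposition \ref{prop:auxiliary_recursion_formula_qdToda}.
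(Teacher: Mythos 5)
Your proposal is correct and follows essentially the same route as the paper's proof: both compute the block-tridiagonal products $L^{(n+1)}R^{(n+1)}$ and $R^{(n)}L^{(n)}$ directly, use the fact that all blocks are $(M+1)\times(M+1)$ diagonal matrices so that each block identity splits into independent scalar equations in $s$, and then match the diagonal and subdiagonal blocks to the two equations of \eqref{eqn:auxiliary_recursion_formula_qdToda} while the superdiagonal blocks are $I_{M+1}$ on both sides. Your index bookkeeping (the relabeling $k=i-1$ on the subdiagonal and the conventions $L_0^{(n)}=L_m^{(n)}=0$) agrees with the paper's computation.
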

\begin{proof}
All blocks in $L^{(n)}$ and $R^{(n)}$ are diagonal matrices of size $(M+1)\times(M+1)$, 
and their sums and products are also diagonal matrices. 
The $k$th diagonal block and lower off-diagonal block in $L^{(n+1)}R^{(n+1)}-\{1/[(1-q)t^{(n+1)}]\}I_{M_m+M}$ 
are, respectively, $R_k^{(n+1)}+L_{k-1}^{(n+1)}-\{1/[(1-q)t^{(n+1)}]\}I_m$ and $L_k^{(n+1)}R_k^{(n+1)}$. 
Moreover, the $(s+1,s+1)$ entries of the blocks $R_k^{(n+1)}+L_{k-1}^{(n+1)}-\{1/[(1-q)t^{(n+1)}]\}I_m$ 
and $L_k^{(n+1)}R_k^{(n+1)}$ are, respectively, $Q_{k,s}^{(n+1)}+E_{k-1,s}^{(n+1)}-1/[(1-q)t^{(n+1)}]$ 
and $Q_{k,s}^{(n+1)}E_{k,s}^{(n+1)}$, which are respectively equal to 
$Q_{k,s}^{(n)}+E_{k,s}^{(n)}-1/[(1-q)t^{(n)}]$ and $Q_{k+1,s}^{(n)}E_{k,s}^{(n)}$. 
Namely, the $(s+1,s+1)$ entries of the blocks $R_k^{(n+1)}+L_{k-1}^{(n+1)}-\{1/[(1-q)t^{(n+1)}]\}I_m$ 
and $L_k^{(n+1)}R_k^{(n+1)}$ coincide with the $(s+1,s+1)$ entries of the blocks 
$R_k^{(n)}+L_k^{(n)}-\{1/[(1-q)t^{(n)}]\}I_m$ and $R_{k+1}^{(n)}L_k^{(n)}$,
which are, respectively, the $k$th diagonal block and lower off-diagonal block 
in $R^{(n)} L^{(n)} - \{ 1/[(1-q)t^{(n)}] \}I_{M_m+M}$. 
Taking into account that the upper off-diagonal blocks of $L^{(n+1)}R^{(n+1)}-\{1/[(1-q)t^{(n+1)}]\}I_{M_m+M}$ and $R^{(n)}L^{(n)}-\{1/[(1-q)t^{(n)}]\}I_{M_m+M}$ are all $I_{M+1}$, 
we therefore have \eqref{eqn:LU_extended_q_dToda}.
\end{proof}
\noindent
Obviously, discrete-time evolution from $n$ to $n+1$ 
in the extended $q$-discrete Toda equation \eqref{eqn:extended_qdToda} 
completes that in \eqref{eqn:LU_extended_q_dToda}. 
Theorem \ref{thm:LR_transformation} thus suggests that 
the extended $q$-discrete Toda equation \eqref{eqn:extended_qdToda} implicitly generates 
similarity transformations, or more precisely, shifted $LR$ transformations of block tridiagonal matrices. 
%
% -------------------------------------------------------------------------------------------------------------
\section{Extended $q$-discrete Lotka--Volterra system and similarity transformations}
\label{sec:4}
% -------------------------------------------------------------------------------------------------------------
In this section, along the same lines as the case of the Toda equation \eqref{eqn:Toda},
we extend $q$-analogues of the LV system \eqref{eqn:LV}, and clarify the relationship to matrix similarity transformations.
\par 
Area et al. \cite{Area_2018} presented a $q$-analogue of the LV ($q$-LV) system \eqref{eqn:LV}:
\begin{align}
\label{eqn:qLV}
\left\{\begin{aligned}
& D_qu_k(t)=w_k(qt)\left(u_{k+1}(qt)-u_{k-1}(t)\right),\quad k=1,2,\dots,2m-1,\\
& u_0(t)\coloneqq 0,\quad u_{2m}(t)\coloneqq 0, 
\end{aligned} \right. 
\end{align}
where the auxiliary variables $w_k(qt)$ satisfy:
\begin{align}
\label{eqn:qLV_w}
\left\{\begin{aligned}
& w_1(qt)=\dfrac{u_1(qt)}{1+(1-q)tu_1(qt)},\\
& w_k(qt)=\dfrac{u_k(qt)}{u_{k-1}(t)}w_{k-1}(qt),\quad k=2,3,\dots,2m-1. 
\end{aligned}\right. 
\end{align}
Of course, the $q$-LV system \eqref{eqn:qLV} with $q\to 1$ immediately leads to the LV system \eqref{eqn:LV}. 
An extension of the $q$-LV system \eqref{eqn:qLV} is given involving arbitrary integer $M$ as:
\begin{align}
\label{eqn:extended_qLV}
\left\{\begin{aligned}
& D_qu_k(t)=w_k(qt)\left(u_{k+M+1}(qt)-u_{k-M-1}(t)\right),\quad k=1,2,\dots,M_{2m}-1,\\
& u_{-M}(t)\coloneqq 0,\quad u_{-M+1}(t)\coloneqq 0,\quad\ldots,\quad u_{0}(t)\coloneqq 0,\\
&u_{M_{2m}}(t)\coloneqq 0,\quad u_{M_{2m}+1}(t)\coloneqq 0,\quad\dots,\quad u_{M_{2m}+M}(t)\coloneqq 0, 
\end{aligned}\right. 
\end{align}
where $w_k(qt)$ satisfy:
\begin{align}
\label{eqn:extended_qLV_w}
\left\{\begin{aligned}
& w_k(qt)=\dfrac{u_k(qt)}{1+(1-q)tu_k(qt)},\quad k=1,2,\dots,M+1,\\
& w_k(qt)=\dfrac{u_k(qt)}{u_{k-M-1}(t)}w_{k-M-1}(qt),\quad k=M+2,M+3,\dots,M_{2m}-1. 
\end{aligned}\right. 
\end{align}
Obviously, \eqref{eqn:extended_qLV} with $M=0$ is the $q$-LV system \eqref{eqn:qLV}. 
We can regard that the extended $q$-LV system \eqref{eqn:extended_qLV} describes predator-prey interactions, assuming that the $k$th species preys on the $(k+M+1)$st species and is food for the $(k-M-1)$st species. In the simple LV case, the $k$th species preys on the $(k+1)$st species and is food for the $(k-1)$st species.
\par 
Given $(M_{2m}+M)\times(M_{2m}+M)$ matrices involving the extended $q$-LV variables $u_{k}(t)$ and $w_{k}(t)$:
\[\begin{aligned}
& U(t)\coloneqq\left(\begin{array}{ccccc}
0 & \overbrace{0\ \cdots\ 0}^M & 1 & & \\
\begin{array}{c} 0 \\ \vdots \\ 0 \end{array}
 & \ddots & \ddots & \ddots & \\
u_{M_1}(t) & \ddots & \ddots & \ddots & 1 \\
 & \ddots & \ddots & \ddots &
 \begin{array}{c} 0 \\ \vdots \\ 0 \end{array} \\
 & & u_{M_{2m}-1}(t) & \underbrace{0\ \cdots\ 0}_{M} & 0
\end{array}\right),\\
& W(t)\coloneqq\left(\begin{array}{ccccc}
0 & & & & \\
\begin{array}{c} 0 \\ \vdots \\ 0 \end{array} & \ddots & & & \\
w_{M_1}(t)u_{M_2}(t) & \ddots & \ddots & & \\
 & \ddots & \ddots & \ddots & \\
 & & w_{M_{2m}-M-2}(t)u_{M_{2m}-1}(t) & \underbrace{0\ \cdots\ 0}_{2M+1} & 0
\end{array}\right), 
\end{aligned}\]
we derive a Lax representation for the extended $q$-LV system \eqref{eqn:extended_qLV}.
\begin{proposition}
\label{prop:Lax_extended_qLV}
For $U(t)$ and $W(t)$, it holds that
\begin{align}
\label{eqn:Lax_extended_qLV}
D_qU(t)=U(qt)W(qt)-W(qt)U(t). 
\end{align}
\end{proposition}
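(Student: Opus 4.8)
The plan is to mirror the proof of Proposition~\ref{prop:Lax_extended_qToda}: I would compute the nonzero entries of the commutator $U(qt)W(qt)-W(qt)U(t)$ directly and match them against $D_qU(t)$. First I would record the band structure of the three matrices. The matrix $U(t)$ carries $1$'s on its $(M+1)$st superdiagonal, $U(t)_{k,k+M+1}=1$, and the variables $u_k(t)$ on its $(M+1)$st subdiagonal, $U(t)_{k+M+1,k}=u_k(t)$, while $W(t)$ carries $w_k(t)u_{k+M+1}(t)$ on the subdiagonal $2M+2$ below the main diagonal, $W(t)_{k+2M+2,k}=w_k(t)u_{k+M+1}(t)$. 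Because the $1$'s are constant, $D_qU(t)$ has nonzero entries only on the $(M+1)$st subdiagonal, namely $(D_qU(t))_{k+M+1,k}=D_qu_k(t)$.

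Next, by band arithmetic — composing the $\pm(M+1)$ bands of $U$ with the $-(2M+2)$ band of $W$ — I would observe that both $U(qt)W(qt)$ and $W(qt)U(t)$ can have nonzero entries only on the subdiagonals at distances $M+1$ and $3M+3$, so the same holds for their difference; in particular there is no diagonal contribution, in contrast to the Toda case. It therefore suffices to evaluate the commutator on these two subdiagonals. On the $(M+1)$st subdiagonal, a direct computation of the $(k+M+1,k)$ entry gives
\[
w_k(qt)u_{k+M+1}(qt)-w_{k-M-1}(qt)u_k(qt).
\]
The crucial step is to invoke the auxiliary relation \eqref{eqn:extended_qLV_w}, which rearranges to $w_{k-M-1}(qt)u_k(qt)=w_k(qt)u_{k-M-1}(t)$; substituting this turns the entry into $w_k(qt)\bigl(u_{k+M+1}(qt)-u_{k-M-1}(t)\bigr)$, which equals $D_qu_k(t)$ by \eqref{eqn:extended_qLV}.

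On the subdiagonal at distance $3M+3$, the $(k+3M+3,k)$ entry of the commutator factors as
\[
u_{k+2M+2}(qt)\bigl(w_k(qt)u_{k+M+1}(qt)-w_{k+M+1}(qt)u_k(t)\bigr),
\]
and the bracketed factor vanishes by \eqref{eqn:extended_qLV_w} taken at index $k+M+1$; since $D_qU(t)$ has no entries on this subdiagonal, this cancellation is exactly the required consistency. Collecting the two computations then yields \eqref{eqn:Lax_extended_qLV}. I expect the main obstacle to be bookkeeping rather than conceptual: correctly tracking the index shifts through the four matrix products and verifying that the boundary conventions $u_{-M}=\dots=u_0=0$ and $u_{M_{2m}}=\dots=u_{M_{2m}+M}=0$ make the edge entries, where indices leave the valid range, consistent with the interior formulas. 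The two instances of the auxiliary relation \eqref{eqn:extended_qLV_w} are the heart of the argument, since they simultaneously force the $(M+1)$st subdiagonal to reproduce $D_qu_k(t)$ and force the $(3M+3)$rd subdiagonal to cancel.
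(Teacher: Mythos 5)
Your proposal is correct and follows essentially the same route as the paper's proof: compute the two nonzero subdiagonal bands of $U(qt)W(qt)-W(qt)U(t)$, use the relation $w_{k}(qt)u_{k-M-1}(t)=u_{k}(qt)w_{k-M-1}(qt)$ from \eqref{eqn:extended_qLV_w} to turn the $(k+M+1,k)$ entry into $w_k(qt)\bigl(u_{k+M+1}(qt)-u_{k-M-1}(t)\bigr)=D_qu_k(t)$ and to cancel the outer band. Your placement of the outer band at $(k+3M+3,k)$ is in fact the correct index (the paper's proof labels it $(k+2M+2,k)$, which appears to be a typo, though the entries it writes there agree with yours).
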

\begin{proof}
The $(k+M+1,k)$ and $(k+2M+2,k)$ entries of $U(qt)W(qt)$ are, respectively, $w_{k}(qt)u_{k+M+1}(qt)$ and $w_{k}(qt) u_{k+M+1}(qt) u_{k+2M+2}(qt)$. 
Those of $W(qt) U(t)$ are, respectively, $w_{k-M-1}(qt)u_{k}(qt)$ and $w_{k+M+1}(qt)u_{k+2M+2}(qt) u_{k}(t)$. 
Using the third equation in \eqref{eqn:extended_qLV}, we see that the $(k+M+1,k)$ and $(k+2M+2,k)$ entries of $U(qt)W(qt)-W(qt)U(t)$ are $w_k(qt)(u_{k+M+1}(qt)-u_{k-M-1}(t))$ and $0$, respectively. 
The other entries of $U(qt)W(qt)$ and $W(qt)U(t)$ are all $0$.
Thus we have \eqref{eqn:Lax_extended_qLV}. 
\end{proof}
Moreover, by preparing $2m\times2m$ matrices involving the extended $q$-LV variable $u_{k,s}(t)\coloneqq u_{M_k+s}(t)$ and $w_{k,s}(t)\coloneqq w_{M_k+s}(t)$: 
\[\begin{aligned}
& {\cal U}_s(t)\coloneqq\left(\begin{array}{cccc}
0 & 1 & & \\
u_{1,s}(t) & 0 & \ddots & \\
 & \ddots & \ddots & 1 \\
 & & u_{2m-1,s}(t) & 0 
\end{array}\right), \\
& {\cal W}_s(qt)\coloneqq\left( \begin{array}{ccccc}
0 & & & & \\
0 & \ddots & & & \\
w_{1,s}(t) u_{2,s}(t) & \ddots & \ddots & & \\
 & \ddots & \ddots & \ddots & \\
 & & w_{2m-2,s}(t) u_{2m-1,s}(t) & 0 & 0 
\end{array}\right), 
\end{aligned}\]
we obtain another Lax representation for the extended $q$-LV system \eqref{eqn:extended_qLV}. 
\begin{proposition}
\label{prop:Lax_extended_qLV_1}
For the block diagonal matrices ${\cal U}(t)\coloneqq {\rm diag}({\cal U}_0(t),{\cal U}_1(t),\dots,$ ${\cal U}_m(t))$ and ${\cal W}(t)\coloneqq {\rm diag}({\cal W}_{0}(t),{\cal W}_1(t),\dots,{\cal W}_m(t))$, it holds that
\begin{align}
\label{eqn:Lax_extended_qLV_1_all}
D_q{\cal U}(t)={\cal U}(qt){\cal W}(qt)-{\cal W}(qt){\cal U}(t). 
\end{align}
\end{proposition}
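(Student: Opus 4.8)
The plan is to mirror the proof of Proposition \ref{prop:Lax_extended_qToda_1}, exploiting the block-diagonal structure to reduce the claim to the single-block Lax representation already established in Proposition \ref{prop:Lax_extended_qLV}. First I would observe that $D_q{\cal U}(t)$ is block diagonal because the $q$-derivative acts entrywise, and that ${\cal U}(qt){\cal W}(qt)-{\cal W}(qt){\cal U}(t)$ is block diagonal because products and differences of block diagonal matrices are again block diagonal, with blocks given by the corresponding blockwise products and differences. Hence \eqref{eqn:Lax_extended_qLV_1_all} is equivalent to the collection of blockwise identities
\[D_q{\cal U}_s(t)={\cal U}_s(qt){\cal W}_s(qt)-{\cal W}_s(qt){\cal U}_s(t),\quad s=0,1,\dots,M,\]
so it suffices to prove each of these separately.

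Next I would verify that each diagonal block reduces to the $M=0$ instance of Proposition \ref{prop:Lax_extended_qLV}. Setting $M=0$ in the definitions of $U(t)$ and $W(t)$ yields a $2m\times 2m$ tridiagonal matrix with zero diagonal, unit superdiagonal, and subdiagonal $u_1(t),\dots,u_{2m-1}(t)$, together with a matrix whose second subdiagonal carries the products $w_1u_2,\dots,w_{2m-2}u_{2m-1}$; these are exactly the shapes of ${\cal U}_s(t)$ and ${\cal W}_s(t)$. It therefore remains to check that the auxiliary relations \eqref{eqn:extended_qLV_w} restrict correctly under the index identification $u_{k,s}(t)=u_{M_k+s}(t)$ and $w_{k,s}(t)=w_{M_k+s}(t)$. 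Because $M_k=(M+1)k-M$, the global shift $k\mapsto k-M-1$ sends the index $M_k+s$ to $M_{k-1}+s$, so the recurrence $w_k(qt)=[u_k(qt)/u_{k-M-1}(t)]\,w_{k-M-1}(qt)$ collapses to $w_{k,s}(qt)=[u_{k,s}(qt)/u_{k-1,s}(t)]\,w_{k-1,s}(qt)$, while the base relations for $k=1,\dots,M+1$ become the single base relation $w_{1,s}(qt)=u_{1,s}(qt)/[1+(1-q)t\,u_{1,s}(qt)]$, one for each $s$. Thus within each fixed residue class $s$ the auxiliary variables satisfy precisely the $M=0$ version of \eqref{eqn:extended_qLV_w}.

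With these observations in place, I would apply Proposition \ref{prop:Lax_extended_qLV} with $M=0$ to the pair $({\cal U}_s(t),{\cal W}_s(t))$, which immediately yields the blockwise identity above for each $s$; reassembling the $M+1$ blocks then gives \eqref{eqn:Lax_extended_qLV_1_all}. The only genuinely substantive point, and the step I would treat most carefully, is this decoupling observation: that restricting the global extended $q$-LV system to indices congruent to $s$ modulo $M+1$ produces $M+1$ mutually independent copies of the ordinary ($M=0$) $q$-LV system. This is entirely parallel to the decoupling used in Proposition \ref{prop:Lax_extended_qToda_1}, so once that correspondence is spelled out the remaining work is the routine entrywise comparison already carried out in the proof of Proposition \ref{prop:Lax_extended_qLV}.
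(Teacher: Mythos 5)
Your proposal is correct and follows essentially the same route as the paper's proof: reduce to the diagonal blocks, observe that each pair $({\cal U}_s(t),{\cal W}_s(t))$ has the form of $(U(t),W(t))$ in the $M=0$ case, and invoke Proposition \ref{prop:Lax_extended_qLV} blockwise. Your explicit verification that the auxiliary relations \eqref{eqn:extended_qLV_w} decouple into $M+1$ independent copies under the index identification $M_k+s\mapsto(k,s)$ is a point the paper leaves implicit, but it is a refinement of the same argument rather than a different one.
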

\begin{proof}
The proof is similar to that of Proposition \ref{prop:Lax_extended_qToda_1}. 
Taking into account that$D_q{\cal U}(t)$ and ${\cal U}(qt){\cal W}(qt)-{\cal W}(qt){\cal U}(t)$ are both block diagonal matrices and focusing on the $s$th diagonal blocks of the block diagonal matrices $D_q{\cal U}(t)$ and ${\cal U}(qt){\cal W}(qt)-{\cal W}(qt){\cal U}(t)$, we find directly that:
\begin{align}
\label{eqn:Lax_extended_qLV_1_sub}
D_q{\cal U}_s(t)={\cal U}_s(qt){\cal W}_s(qt)-{\cal W}_s(qt){\cal U}_s(t),\quad s=0,1,\dots,M.
\end{align}
Noting that ${\cal U}_s(t)$ and ${\cal W}_s(t)$ have the same form as $U(t)$ and $W(t)$ 
in Proposition \ref{prop:Lax_extended_qLV}, respectively, 
we obtain equalities of nonzero entries in \eqref{eqn:Lax_extended_qLV_1_sub}:
\[D_qu_{k,s}(t)=w_{k,s}(qt)\left(u_{k+1,s}(qt)-u_{k-1,s}(t)\right),\]
which is equivalent to the extended $q$-LV system \eqref{eqn:extended_qLV} under the replacements
$u_{k,s}(t)=u_{M_k+s}(t)$ and $w_{k,s}(t)=w_{M_k+s}(t)$. 
Thus, we have \eqref{eqn:Lax_extended_qLV_1_all}. 
\end{proof}
Comparing Propositions \ref{prop:Lax_extended_qToda} and \ref{prop:Lax_extended_qToda_1} 
with Propositions \ref{prop:Lax_extended_qLV} and \ref{prop:Lax_extended_qLV_1}, 
we also find that the extended $q$-Toda equation \eqref{eqn:extended_qToda} and the $q$-LV system \eqref{eqn:extended_qLV} 
both generate similarity transformations of $(M+1)$-tridiagonal matrices and block diagonal matrices. 
This implies that the extended $q$-LV system \eqref{eqn:extended_qLV} has a close relationship 
to the extended $q$-Toda equation \eqref{eqn:extended_qToda}. 
The following proposition actually gives the B\"acklund transformation between the extended $q$-Toda equation \eqref{eqn:extended_qToda} 
and the extended $q$-LV system \eqref{eqn:extended_qLV}. 
\begin{proposition}
\label{prop:Backlund}
The extended $q$-Toda variables $x_{k,s}(t)$, $y_{k,s}(t)$, and $g_{k,s}(t)$
and the extended $q$-LV variables $u_{2k-1,s}(t)$, $u_{2k,s}(t)$, and $w_{2k-1,s}(qt)$ satisfy:
\begin{align}
\label{eqn:Backlund}
\left\{\begin{aligned}
& x_{k,s}(t)=u_{2k-1,s}(t)+u_{2k-2,s}(t),\quad k=1,2,\dots,m,\quad s=0,1,\dots,M,\\ 
& y_{k,s}(t)=u_{2k,s}(t)u_{2k-1,s}(t),\quad k=1,2,\dots,m, \quad s=0,1,\dots,M, 
\end{aligned}\right.
\end{align}
and:
\begin{align}
\label{eqn:Backlund_auxiliary}
g_{k,s}(qt)=w_{2k-1,s}(qt)u_{2k,s}(qt),\quad k=1,2,\dots,m,\quad s=0,1,\dots,M. 
\end{align}
\end{proposition}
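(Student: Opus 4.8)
The plan is to read \eqref{eqn:Backlund} and \eqref{eqn:Backlund_auxiliary} as a change of variables and to verify that, once the extended $q$-LV variables obey the block form of the extended $q$-LV system established in Proposition~\ref{prop:Lax_extended_qLV_1}, the substituted quantities satisfy every relation of the block form of the extended $q$-Toda equation used in Proposition~\ref{prop:Lax_extended_qToda_1}, namely $D_qx_{k,s}(t)=g_{k,s}(qt)-g_{k-1,s}(qt)$, $D_qy_{k,s}(t)=g_{k,s}(qt)(x_{k+1,s}(qt)-x_{k,s}(t))$, together with the auxiliary $g$-relations inherited from \eqref{eqn:extended_qToda_g}. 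The only external tools I need are the $q$-Leibniz rule
\[
D_q(fg)(t)=f(t)\,D_qg(t)+g(qt)\,D_qf(t)=f(qt)\,D_qg(t)+g(t)\,D_qf(t),
\]
which follows at once from the definition \eqref{eqn:q_derivative}, and the block version of the $w$-relation in \eqref{eqn:extended_qLV_w}, namely $w_{k,s}(qt)\,u_{k-1,s}(t)=u_{k,s}(qt)\,w_{k-1,s}(qt)$ (legitimate because ${\cal U}_s$, ${\cal W}_s$ share the form of $U$, $W$ by Proposition~\ref{prop:Lax_extended_qLV}).

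For the first $q$-Toda equation I would differentiate $x_{k,s}(t)=u_{2k-1,s}(t)+u_{2k-2,s}(t)$ and insert the two $q$-LV evolutions for $u_{2k-1,s}$ and $u_{2k-2,s}$. Of the four resulting terms, $w_{2k-1,s}(qt)u_{2k,s}(qt)$ is already $g_{k,s}(qt)$ by \eqref{eqn:Backlund_auxiliary}; the two mixed products $w_{2k-1,s}(qt)u_{2k-2,s}(t)$ and $w_{2k-2,s}(qt)u_{2k-3,s}(t)$, rewritten via the $w$-relation, become $u_{2k-1,s}(qt)w_{2k-2,s}(qt)$ and $u_{2k-2,s}(qt)w_{2k-3,s}(qt)$. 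The $w_{2k-2,s}$ contributions then cancel against the fourth term, and the survivor is exactly $-w_{2k-3,s}(qt)u_{2k-2,s}(qt)=-g_{k-1,s}(qt)$, giving $g_{k,s}(qt)-g_{k-1,s}(qt)$.

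For the second $q$-Toda equation I would apply the $q$-Leibniz rule to $y_{k,s}(t)=u_{2k,s}(t)\,u_{2k-1,s}(t)$, choosing the branch that produces $u_{2k,s}(qt)\,D_qu_{2k-1,s}(t)+u_{2k-1,s}(t)\,D_qu_{2k,s}(t)$, and substitute the $q$-LV equations. The first summand already equals $g_{k,s}(qt)\bigl(u_{2k,s}(qt)-u_{2k-2,s}(t)\bigr)$; the second, which carries the factor $w_{2k,s}(qt)u_{2k-1,s}(t)$, is turned by the $w$-relation into $g_{k,s}(qt)\bigl(u_{2k+1,s}(qt)-u_{2k-1,s}(t)\bigr)$, so that the whole right-hand side factors as $g_{k,s}(qt)\bigl(x_{k+1,s}(qt)-x_{k,s}(t)\bigr)$. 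The auxiliary $g$-relations are checked separately: the base case reduces to the definition of $w_{1,s}$ together with the boundary $u_{0,s}(t)=0$, which gives $x_{1,s}(qt)=u_{1,s}(qt)$ and hence $g_{1,s}(qt)=y_{1,s}(qt)/(1+(1-q)t\,x_{1,s}(qt))$, while the recursion $g_{k,s}(qt)=\bigl(y_{k,s}(qt)/y_{k-1,s}(t)\bigr)g_{k-1,s}(qt)$ emerges from iterating the $w$-relation twice.

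The main obstacle is bookkeeping rather than conceptual: in the $q$-setting every product rule forces a definite mismatch between factors evaluated at $t$ and at $qt$, and the telescoping above works only because the $w$-relation from \eqref{eqn:extended_qLV_w} pairs precisely $w_{k,s}(qt)$ with $u_{k-1,s}(t)$. I would therefore be most careful in the $y_{k,s}$ computation to select the branch of the $q$-Leibniz rule whose arguments align with that relation; this matching is the step where the classical $q\to1$ intuition is least reliable. Finally I would note that the even-index boundaries $u_{0,s}(t)=u_{2m,s}(t)=0$ are consistent with the $q$-Toda boundaries $y_{0,s}=y_{m,s}=g_{0,s}=g_{m,s}=0$, so the transformation \eqref{eqn:Backlund}--\eqref{eqn:Backlund_auxiliary} respects the edges of the lattice.
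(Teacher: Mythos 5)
Your proof is correct and follows essentially the same route as the paper's: both rest on the $q$-Leibniz rule applied to $y_{k,s}(t)=u_{2k,s}(t)u_{2k-1,s}(t)$, the key relation $w_{k+1,s}(qt)\,u_{k,s}(t)=w_{k,s}(qt)\,u_{k+1,s}(qt)$ from \eqref{eqn:extended_qLV_w}, and the separate verification that the $g$-recursion in \eqref{eqn:extended_qToda_g} turns into \eqref{eqn:Backlund_auxiliary}. The only cosmetic difference is direction (you substitute into the $q$-LV system to recover the $q$-Toda relations, whereas the paper substitutes into the $q$-Toda equation and reduces it to the $q$-LV system), which is the same algebra read backwards.
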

\begin{proof}
Using \eqref{eqn:extended_qLV_w} and \eqref{eqn:Backlund}, we can easily rewrite the variables $g_{k,s}(qt)$ 
in \eqref{eqn:extended_qToda_g} as \eqref{eqn:Backlund_auxiliary}. 
Considering \eqref{eqn:Backlund} and \eqref{eqn:Backlund_auxiliary} 
in the extended $q$-Toda equation \eqref{eqn:extended_qToda}, we obtain:
\[\left\{\begin{aligned}
& D_qu_{2k-1,s}(t)+D_qu_{2k-2,s}(t)=w_{2k-1,s}(qt)u_{2k,s}(qt)-w_{2k-3,s}(qt)u_{2k-2,s}(qt),\\
& \qquad k=1,2,\dots,m, \quad s=0,1,\dots,M,\\
& (D_qu_{2k,s}(t))u_{2k-1,s}(t)+u_{2k,s}(qt)D_qu_{2k-1,s}(t)\\
& \quad =w_{2k-1,s}(qt)u_{2k,s}(qt)(u_{2k+1,s}(qt)+u_{2k,s}(qt)-u_{2k-1,s}(t)-u_{2k-2,s}(t)),\\
& \qquad k=1,2,\dots,m-1,\quad s=0,1,\dots,M.
\end{aligned}\right.\]
Recalling here that $w_{k+1,s}(qt)u_{k,s}(t)=w_{k,s}(qt)u_{k+1,s}(qt)$, 
we thus derive the extended $q$-LV system \eqref{eqn:extended_qLV}.
\end{proof}
\begin{figure}[tb]
\centering
\includegraphics[width=0.7\textwidth]{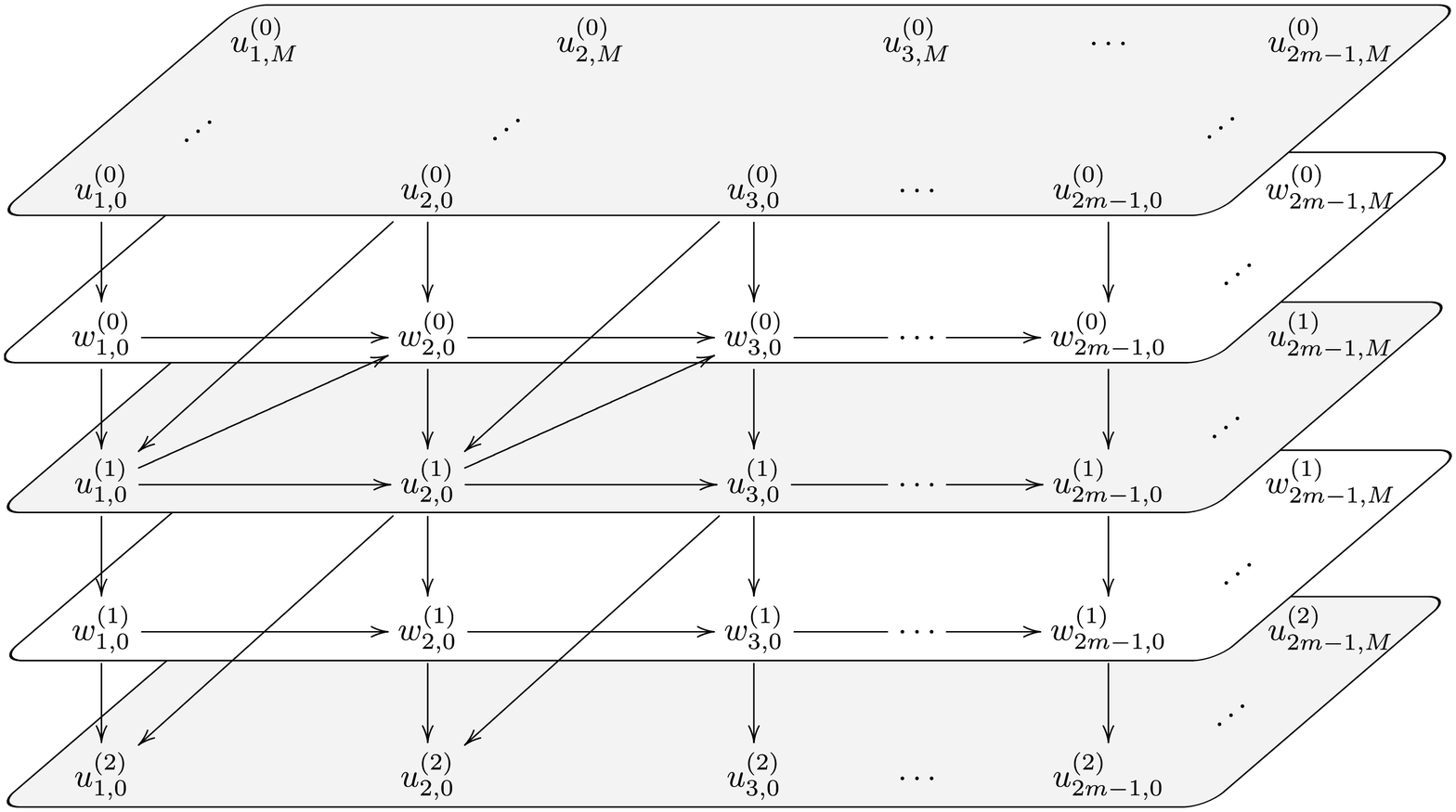}
\caption{Discrete-time evolution in the extended $q$-discrete Lotka-Volterra system \eqref{eqn:extended_q_dLV}. \label{fig:diagram_time_evolution_qdLV}}
\end{figure}
\par
Now we consider a time-discretization of the extended $q$-LV system \eqref{eqn:extended_qLV}. 
Again, by considering a discrete-time sequence $t^{(0)},t^{(1)},\dots$ such that $t^{(n)}\coloneqq q^{-1} t^{(n-1)}$\yadd{,} where $t^{(0)}>0$, 
we can rewrite the extended $q$-LV system \eqref{eqn:extended_qLV} with $t=t^{(n)}$ as:
\begin{align}
\label{eqn:extended_q_dLV}
\left\{\begin{aligned}
& \dfrac{u_{k,s}^{(n+1)}-u_{k,s}^{(n)}}{(1-q)t^{(n)}}=w_{k,s}^{(n)}\left(u_{k+1,s}^{(n)}-u_{k-1,s}^{(n+1)}\right),\quad k=1,2,\dots,2m-1,\\
& u_{0,s}^{(n)}\coloneqq 0,\quad u_{2m,s}^{(n)}\coloneqq 0,
\end{aligned}\right. 
\end{align}
where $u_{k,s}^{(n+1)}\coloneqq u_{k,s}(t^{(n)})$ and $w_{k,s}^{(n+1)}\coloneqq w_{k,s}(t^{(n)})$ and $w_{k,s}^{(n)}$ satisfy:
\begin{align}
\label{eqn:extended_q_dLV_w}
\left\{\begin{aligned}
& w_{1,s}^{(n)}=\dfrac{u_{1,s}^{(n)}}{1+(1-q)t^{(n)}u_{1,s}^{(n)}},\quad
w_{k,s}^{(n)}=\dfrac{u_{k,s}^{(n)}}{u_{k-1,s}^{(n+1)}}w_{k-1,s}^{(n)},\quad k=2,3,\dots,2m-1,\\
& w_{0,s}^{(n)}\coloneqq 0,\quad w_{2m,s}^{(n)}\coloneqq 0.
\end{aligned}\right. 
\end{align}
We can easily check that the sequences $\{u_{k,s}^{(n)}\}_{n=0,1,\dots}$ and $\{w_{k,s}^{(n)}\}_{n=0,1,\dots}$ are uniquely determined under the initial settings of $u_{k,s}^{(n)}$ and $w_{k,s}^{(n)}$. 
Figure \ref{fig:diagram_time_evolution_qdLV} shows the discrete-time evolution from $n$ to $n+1$ in the $q$-discrete LV system \eqref{eqn:extended_q_dLV}. 
Using $U^{(n)}\coloneqq U(t^{(n-1)})$ and $W^{(n)}\coloneqq W(t^{(n-1)})$, we obtain a matrix representation for the extended $q$-discrete LV system \eqref{eqn:extended_q_dLV}.
\begin{theorem}
\label{thm:similar_transformation_dqLV}
For the $(M+1)$-tridiagonal matrix $U^{(n)}$ and $(2M+2)$-subdiagonal matrix $W^{(n)}$, it holds that
\[(I_{M_{2m}+M}+(1-q)t^{(n)}W^{(n)})U^{(n+1)}=U^{(n)}(I_{M_{2m}+M}+(1-q)t^{(n)}W^{(n)}).\]
\end{theorem}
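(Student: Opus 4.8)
The plan is to mirror exactly the derivation of Theorem~\ref{thm:similar_transformation_qdToda} from the continuous Lax representation, since the statement here is precisely the $q$-LV counterpart of that result. The starting point is the Lax representation \eqref{eqn:Lax_extended_qLV} of Proposition~\ref{prop:Lax_extended_qLV}, namely $D_qU(t)=U(qt)W(qt)-W(qt)U(t)$, which holds identically in $t$. I would simply specialize this identity to the discrete-time points $t=t^{(n)}$ and unwind the definition of the $q$-derivative, so that the matrix flow collapses into a single algebraic relation between consecutive iterates.

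Concretely, first I would recall from \eqref{eqn:q_derivative} that $D_qU(t)=\bigl(U(t)-U(qt)\bigr)/[(1-q)t]$ for $t\neq 0$, so that evaluating at $t=t^{(n)}$ gives $D_qU(t)\big|_{t=t^{(n)}}=\bigl(U(t^{(n)})-U(qt^{(n)})\bigr)/[(1-q)t^{(n)}]$. Next I would use the relation $qt^{(n)}=t^{(n-1)}$ together with the identifications $U^{(n)}=U(t^{(n-1)})$, $U^{(n+1)}=U(t^{(n)})$, and $W^{(n)}=W(t^{(n-1)})$ to rewrite both sides purely in terms of the discrete variables. This turns the Lax equation into $\bigl(U^{(n+1)}-U^{(n)}\bigr)/[(1-q)t^{(n)}]=U^{(n)}W^{(n)}-W^{(n)}U^{(n+1)}$.

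The final step is purely algebraic: multiplying through by $(1-q)t^{(n)}$ and collecting the terms containing $U^{(n+1)}$ on the left and those containing $U^{(n)}$ on the right yields $(I_{M_{2m}+M}+(1-q)t^{(n)}W^{(n)})U^{(n+1)}=U^{(n)}(I_{M_{2m}+M}+(1-q)t^{(n)}W^{(n)})$, which is exactly the claim.

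There is essentially no analytic obstacle; the only point requiring care is the index bookkeeping --- in particular tracking which evaluation ($t^{(n)}$ versus $qt^{(n)}=t^{(n-1)}$) produces $U^{(n+1)}$, $U^{(n)}$, and $W^{(n)}$, so that the shift matches the convention $U^{(n+1)}\coloneqq U(t^{(n)})$ fixed in the definition of the extended $q$-discrete LV system \eqref{eqn:extended_q_dLV}. It is also worth confirming, as a consistency check, that the entrywise content of the resulting matrix identity reproduces \eqref{eqn:extended_q_dLV} and \eqref{eqn:extended_q_dLV_w}, exactly as the Toda analogue of this argument reproduced \eqref{eqn:extended_qdToda}.
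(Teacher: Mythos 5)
Your proposal is correct and matches the paper's own (implicit) derivation exactly: the paper obtains the theorem by setting $U^{(n)}=U(t^{(n-1)})$, $W^{(n)}=W(t^{(n-1)})$ in the Lax representation \eqref{eqn:Lax_extended_qLV} at $t=t^{(n)}$, unwinding the $q$-derivative via $qt^{(n)}=t^{(n-1)}$, and rearranging, just as you do. Your index bookkeeping ($U^{(n+1)}=U(t^{(n)})$ on the left of the commutator-type term, $U^{(n)}=U(qt^{(n)})$ on the right) is the one point of care, and you have it right.
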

\noindent
Since $\det(I_{M_{2m}+M}+(1-q)t^{(n)}W^{(n)})=1$, it follows from Theorem \ref{thm:similar_transformation_dqLV} that 
$U^{(n+1)}=(I_{M_{2m}+M}+(1-q)t^{(n)}W^{(n)})^{-1}U^{(n)}(I_{M_{2m}+M}+(1-q)t^{(n)}W^{(n)})$. 
Theorem \ref{thm:similar_transformation_dqLV} thus suggests that the extended $q$-discrete LV system \eqref{eqn:extended_q_dLV} 
generates a similarity transformation from $U^{(n)}$ to $U^{(n+1)}$. 
The replacements $x_{k,s}^{(n)}=x_{k,s}(t^{(n-1)})$, $y_{k,s}^{(n)}=y_{k,s}(t^{(n-1)})$, $g_{k,s}^{(n)}=g_{k,s}(t^{(n-1)})$, 
$u_{2k-1,s}^{(n)}=u_{2k-1,s}(t^{(n-1)})$, $u_{2k,s}^{(n)}=u_{2k,s}(t^{(n-1)})$, and $w_{2k-1,s}^{(n)}=w_{2k-1,s}(t^{(n-1)})$ 
in Proposition \ref{prop:Backlund} yields the B\"acklund transformation 
between the extended $q$-discrete Toda equation \eqref{eqn:extended_qdToda} 
and the extended $q$-discrete LV system \eqref{eqn:extended_q_dLV}. 
\begin{proposition}
\label{prop:discrete_Backlund}
The extended $q$-discrete Toda variables $x_{k,s}^{(n)}$, $y_{k,s}^{(n)}$, and $g_{k,s}^{(n)}$
and the extended $q$-discrete LV variables $u_{2k-1,s}^{(n)}$, $u_{2k,s}^{(n)}$, and $w_{2k-1,s}^{(n)}$ satisfy:
\begin{align}
\label{eqn:discrete_Backlund}
\left\{\begin{aligned}
& x_{k,s}^{(n)}=u_{2k-1,s}^{(n)}+u_{2k-2,s}^{(n)},\quad k=1,2,\dots,m,\quad s=0,1,\dots,M, \\ 
& y_{k,s}^{(n)}=u_{2k,s}^{(n)}u_{2k-1,s}^{(n)},\quad k=1,2,\dots,m,\quad s=0,1,\dots,M, 
\end{aligned}\right.
\end{align}
and:
\begin{align}
\label{eqn:discrete_Backlund_auxiliary}
g_{k,s}^{(n)}=w_{2k-1,s}^{(n)}u_{2k,s}^{(n)},\quad k=1,2,\dots,m,\quad s=0,1,\dots,M. 
\end{align}
\end{proposition}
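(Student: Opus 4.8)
The plan is to obtain the discrete Bäcklund relations by specializing the continuous Bäcklund transformation of Proposition~\ref{prop:Backlund} to the time lattice $\{t^{(n)}\}$. The key observation is that the extended $q$-discrete Toda and LV variables are nothing but the corresponding continuous variables evaluated at the discrete times: by construction $x_{k,s}^{(n)}=x_{k,s}(t^{(n-1)})$, $y_{k,s}^{(n)}=y_{k,s}(t^{(n-1)})$, $g_{k,s}^{(n)}=g_{k,s}(t^{(n-1)})$, and likewise $u_{j,s}^{(n)}=u_{j,s}(t^{(n-1)})$, $w_{j,s}^{(n)}=w_{j,s}(t^{(n-1)})$. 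Since $t^{(n)}=q^{-1}t^{(n-1)}$, we have the identity $q\,t^{(n)}=t^{(n-1)}$, which is exactly what lets a $q$-shifted continuous relation collapse onto a single lattice point.

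First I would treat the two relations in \eqref{eqn:discrete_Backlund}. Because Proposition~\ref{prop:Backlund} holds as an identity in the continuous variable $t$, I simply evaluate $x_{k,s}(t)=u_{2k-1,s}(t)+u_{2k-2,s}(t)$ and $y_{k,s}(t)=u_{2k,s}(t)u_{2k-1,s}(t)$ at $t=t^{(n-1)}$. Under the identifications above these become precisely $x_{k,s}^{(n)}=u_{2k-1,s}^{(n)}+u_{2k-2,s}^{(n)}$ and $y_{k,s}^{(n)}=u_{2k,s}^{(n)}u_{2k-1,s}^{(n)}$, so \eqref{eqn:discrete_Backlund} follows with no further computation.

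Next I would handle the auxiliary relation \eqref{eqn:discrete_Backlund_auxiliary}. Here the continuous form \eqref{eqn:Backlund_auxiliary} reads $g_{k,s}(qt)=w_{2k-1,s}(qt)u_{2k,s}(qt)$, with the argument shifted by the factor $q$. I would therefore evaluate at $t=t^{(n)}$ rather than $t^{(n-1)}$, so that every argument $qt$ becomes $q\,t^{(n)}=t^{(n-1)}$; the identity then reads $g_{k,s}(t^{(n-1)})=w_{2k-1,s}(t^{(n-1)})u_{2k,s}(t^{(n-1)})$, which is exactly $g_{k,s}^{(n)}=w_{2k-1,s}^{(n)}u_{2k,s}^{(n)}$.

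The only real subtlety, and the step I would be most careful about, is the bookkeeping of the $q$-shift together with the time-index offset, i.e.\ correctly matching the superscript $(n)$ against the continuous argument through $x_{k,s}^{(n)}=x_{k,s}(t^{(n-1)})$ and $q\,t^{(n)}=t^{(n-1)}$; choosing the wrong base point would shift the discrete-time index by one in the auxiliary relation. As an alternative that avoids invoking the continuous identity altogether, one could give a purely discrete derivation mirroring the proof of Proposition~\ref{prop:Backlund}: insert \eqref{eqn:discrete_Backlund} into the discrete $w$-recursion \eqref{eqn:extended_q_dLV_w}, confirm \eqref{eqn:discrete_Backlund_auxiliary} by comparison with \eqref{eqn:extended_qdToda_g}, and then substitute all three relations into the extended $q$-discrete Toda equation \eqref{eqn:extended_qdToda} to recover the extended $q$-discrete LV system \eqref{eqn:extended_q_dLV}, using the discrete counterpart $w_{k+1,s}^{(n)}u_{k,s}^{(n+1)}=w_{k,s}^{(n)}u_{k+1,s}^{(n)}$ in place of the continuous relation $w_{k+1,s}(qt)u_{k,s}(t)=w_{k,s}(qt)u_{k+1,s}(qt)$.
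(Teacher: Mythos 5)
Your proof is correct and matches the paper's own (implicit) argument: the paper obtains Proposition~\ref{prop:discrete_Backlund} precisely by the replacements $x_{k,s}^{(n)}=x_{k,s}(t^{(n-1)})$, $u_{j,s}^{(n)}=u_{j,s}(t^{(n-1)})$, etc.\ in Proposition~\ref{prop:Backlund}, exactly as you do, and your handling of the $q$-shift via $q\,t^{(n)}=t^{(n-1)}$ in the auxiliary relation is the right bookkeeping. No issues.
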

\noindent
With the help of Proposition \ref{prop:discrete_Backlund}, 
we can express $Q_{k,s}^{(n)}$ and $E_{k,s}^{(n)}$ appearing in Proposition \ref{prop:auxiliary_recursion_formula_qdToda}
using the extended $q$-discrete LV variables $u_{2k-1,s}^{(n)}$, $u_{2k,s}^{(n)}$, and $w_{2k-1,s}^{(n)}$ as:
\begin{align}
\label{eqn:auxiliary_variable_extended_qLV}
\left\{\begin{aligned}
& Q_{k,s}^{(n)}\coloneqq u_{2k-1,s}^{(n)}+u_{2k-2,s}^{(n)}+\dfrac{1}{(1-q)t^{(n)}}-(1-q)t^{(n)}w_{2k-3,s}^{(n)}u_{2k-2,s}^{(n)},\\
& \qquad k = 1,2,\dots,m,\quad s=0,1,\dots,M,\\
& E_{k,s}^{(n)}\coloneqq (1-q)t^{(n)}w_{2k-1,s}^{(n)}u_{2k,s}^{(n)},\quad k=1,2,\dots,m,\quad s=0,1,\dots,M. 
\end{aligned}\right.
\end{align}
This implies that the extended $q$-discrete LV system \eqref{eqn:extended_q_dLV} implicitly gives a sequence 
of shifted $LR$ transformations which are the same as shown in Theorem \ref{thm:LR_transformation}.
%
% ------------------------------------------
\section{Asymptotic convergence}
\label{sec:5}
% ------------------------------------------
In this section, we describe the properties of the Hankel determinants related to an infinite moment sequence, 
and then clarify the determinantal solution expressed using the Hankel determinants 
to the extended $q$-discrete Toda equation \eqref{eqn:extended_qdToda}. 
Moreover, by considering asymptotic expansions of the Hankel determinants as $n\to\infty$, 
we clarify asymptotic convergence as $n\to\infty$ in the extended $q$-discrete Toda equation \eqref{eqn:extended_qdToda}.
\par
We consider monic polynomials with respect to $z$ having distinct roots $\lambda_{1,s},\lambda_{2,s},$ $\dots,\lambda_{m,s}$:
\begin{align}
\label{eqn:polynomial}
p_s(z)\coloneqq (z-\lambda_{1,s})(z-\lambda_{2,s})\cdots (z-\lambda_{m,s}),\quad s=0,1,\dots,M. 
\end{align}
Equation \eqref{eqn:polynomial} can be expanded as:
\begin{equation}
\label{eqn:polynomial_extended}
p_s(z)=z^m+a_{1,s}z^{m-1}+\cdots +a_{m-1,s}z+a_{m,s},\quad s=0,1,\dots,M,
\end{equation}
where $a_{1,s},a_{2,s},\dots,a_{m,s}$ are given in terms of $\lambda_{1,s},\lambda_{2,s},\dots,\lambda_{m,s}$.
Now, we introduce an infinite sequence $\{ f_{k,s}^{(n)} \}_{n=0}^{\infty}$
associated with the constants $\lambda_{1,s},\lambda_{2,s},\dots,\lambda_{m,s}$.
The sequence $\{f_{k,s}^{(n)}\}_{n=0}^{\infty}$ is sometimes called a moment sequence.
Let us assume that $\{ f_{k,s}^{(n)} \}_{n=0}^{\infty}$ satisfies the linear equations involving $a_{1,s},a_{2,s},\dots,a_{m,s}$:
\begin{align}
\label{eqn:moment_linear_equation}
f_{k+m,s}^{(n)}+\sum_{i=1}^ma_{i,s}f_{k+m-i,s}^{(n)}=0,\quad k=0,1,\dots,m,\quad s=0,1,\dots,M,\quad n=0,1,\dots,
\end{align}
and discrete-time evolutions from $n$ to $n+1$:
\begin{align}
\label{eqn:moment_time_evolution}
f_{k,s}^{(n+1)}=f_{k+1,s}^{(n)}-\mu^{(n)}f_{k,s}^{(n)},\quad k=0,1,\dots,m-1,\quad s=0,1,\dots,M, 
\end{align}
where $f_{2m+1,s}^{(n)}\coloneqq 0$ and $\mu^{(n)}$ are arbitrary constants. 
It is obvious that the moment sequence $\{f_{k,s}^{(n)}\}_{n=0}^{\infty}$ is uniquely determined 
from the values of $f_{0,s}^{(0)},f_{1,s}^{(0)},\dots,f_{m-1,s}^{(0)}$ for each $s$. 
It is worth noting that, for each $s$, the moment sequence $\{f_{k,s}^{(n)}\}_{n=0}^{\infty}$ is equivalent to 
the moment sequence $\{f_{k}^{(n)}\}_{k,n=0}^{\infty}$ given by \cite{Shinjo_2018}. 
Thus, by extending their discussion, we can easily derive a proposition for the moments $f_{k,s}^{(n)}$. 
\begin{proposition}(cf.~\cite{Shinjo_2018})
For each $s$, the moments $f_{k,s}^{(n)}$ can be expressed as:
\begin{align}
\label{eqn:moment_expressed}
f_{k,s}^{(n)}=\sum_{\ell=1}^mc_{\ell,s}\lambda_{\ell,s}^k\rho_{\ell,s}^{(n)}\quad k=0,1,\dots,2m,\quad n=0,1,\dots, 
\end{align}
where:
\begin{align}
\label{eqn:rho}
\rho_{\ell,s}^{(0)}\coloneqq 1,\quad\rho_{\ell,s}^{(n)}\coloneqq\prod_{j=0}^{n-1}(\lambda_{\ell,s}-\mu^{(j)}),\quad\ell =1,2,\dots,m, 
\end{align}
and constants $c_{1,s},c_{2,s},\dots,c_{m,s}$ satisfy:
\begin{align}
\label{eqn:c_condition}
\left(\begin{array}{c}
c_{1,s} \\ c_{2,s} \\ \vdots \\ c_{m,s} 
\end{array}\right) 
=\left(\begin{array}{cccc}
1 & 1 & \cdots & 1 \\
\lambda_{1,s} & \lambda_{2,s} & \cdots & \lambda_{m,s}\\
\vdots & \vdots & & \vdots\\
\lambda_{1,s}^{m-1} & \lambda_{2,s}^{m-1} & \cdots & \lambda_{m,s}^{m-1}
\end{array}\right)^{-1} 
\left(\begin{array}{c}
f_{0,s}^{(0)} \\ f_{1,s}^{(0)} \\ \vdots \\ f_{m-1,s}^{(0)} 
\end{array}\right). 
\end{align}
\end{proposition}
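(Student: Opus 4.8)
The plan is to verify the claimed closed form \eqref{eqn:moment_expressed} by checking that the right-hand side satisfies the two defining systems for the moment sequence --- the linear recurrence \eqref{eqn:moment_linear_equation} and the time-evolution \eqref{eqn:moment_time_evolution} --- together with the initial data encoded in \eqref{eqn:c_condition}. Since the excerpt states that for each fixed $s$ the sequence $\{f_{k,s}^{(n)}\}$ is uniquely determined by the initial values $f_{0,s}^{(0)},\dots,f_{m-1,s}^{(0)}$, it suffices to exhibit one sequence obeying all the constraints; uniqueness then forces it to equal the actual moments.

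First I would fix $s$ and define the candidate $\tilde f_{k,s}^{(n)}\coloneqq\sum_{\ell=1}^m c_{\ell,s}\lambda_{\ell,s}^k\rho_{\ell,s}^{(n)}$ with $c_{\ell,s}$ and $\rho_{\ell,s}^{(n)}$ given by \eqref{eqn:c_condition} and \eqref{eqn:rho}. The initial step is the matching of initial data: at $n=0$ we have $\rho_{\ell,s}^{(0)}=1$, so $\tilde f_{k,s}^{(0)}=\sum_\ell c_{\ell,s}\lambda_{\ell,s}^k$ for $k=0,1,\dots,m-1$, which is exactly the Vandermonde system \eqref{eqn:c_condition} read the other way; hence $\tilde f_{k,s}^{(0)}=f_{k,s}^{(0)}$ for $k=0,\dots,m-1$. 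Next I would verify the recurrence \eqref{eqn:moment_linear_equation}: substituting the candidate and interchanging the two finite sums gives $\sum_{\ell}c_{\ell,s}\rho_{\ell,s}^{(n)}\lambda_{\ell,s}^{k}\bigl(\lambda_{\ell,s}^m+\sum_{i=1}^m a_{i,s}\lambda_{\ell,s}^{m-i}\bigr)$, and the inner factor is precisely $p_s(\lambda_{\ell,s})=0$ by \eqref{eqn:polynomial}--\eqref{eqn:polynomial_extended}, since each $\lambda_{\ell,s}$ is a root. Thus the recurrence holds for every $k$ and $n$, which in particular is what pins down $f_{k,s}^{(n)}$ for indices beyond $m-1$.

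The third step is the time-evolution \eqref{eqn:moment_time_evolution}. Here the key observation is the telescoping identity $\rho_{\ell,s}^{(n+1)}=(\lambda_{\ell,s}-\mu^{(n)})\rho_{\ell,s}^{(n)}$, immediate from the product in \eqref{eqn:rho}. Then $\tilde f_{k,s}^{(n+1)}=\sum_\ell c_{\ell,s}\lambda_{\ell,s}^k(\lambda_{\ell,s}-\mu^{(n)})\rho_{\ell,s}^{(n)}=\sum_\ell c_{\ell,s}\lambda_{\ell,s}^{k+1}\rho_{\ell,s}^{(n)}-\mu^{(n)}\sum_\ell c_{\ell,s}\lambda_{\ell,s}^{k}\rho_{\ell,s}^{(n)}=\tilde f_{k+1,s}^{(n)}-\mu^{(n)}\tilde f_{k,s}^{(n)}$, which is \eqref{eqn:moment_time_evolution}. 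Having matched the initial data and both evolution rules, uniqueness of the moment sequence gives $f_{k,s}^{(n)}=\tilde f_{k,s}^{(n)}$, completing the proof.

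I do not expect a genuine obstacle, since each of the three checks is a one-line manipulation keyed to a structural fact (the Vandermonde inversion, the vanishing $p_s(\lambda_{\ell,s})=0$, and the telescoping of $\rho$). The one point demanding care is bookkeeping on the index ranges: the recurrence \eqref{eqn:moment_linear_equation} is quoted only up to $k=0,1,\dots,m$ and the time-evolution only up to $k=0,1,\dots,m-1$, so I would make sure the candidate satisfies \eqref{eqn:moment_expressed} over the full stated range $k=0,1,\dots,2m$ by appealing to the recurrence to propagate from the low-index values, rather than assuming the closed form extends automatically. Since the problem reduces to invoking uniqueness --- a fact granted in the excerpt --- the real content is simply confirming the candidate is a solution, and that is entirely routine.
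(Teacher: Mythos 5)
Your proposal is correct and is exactly the standard argument the paper has in mind: the paper gives no explicit proof here, deferring to the cited reference with the remark that the claim follows ``by extending their discussion,'' and that extension is precisely your three checks (Vandermonde matching of the initial data, vanishing of $p_s(\lambda_{\ell,s})$ for the linear recurrence, and the telescoping $\rho_{\ell,s}^{(n+1)}=(\lambda_{\ell,s}-\mu^{(n)})\rho_{\ell,s}^{(n)}$ for the time evolution) combined with the uniqueness of the moment sequence. Your added care about propagating the formula to the full range $k=0,1,\dots,2m$ via the recurrence is a sensible and correct refinement rather than a deviation.
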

Similarly to Ref.\cite{Shinjo_2018}, by considering Hankel determinants of degree $k$ given using the moment $f_{k,s}^{(n)}$ as:
\begin{align}
\label{eqn:determinant}
\begin{aligned}
& H_{-1,s}^{(n)}\coloneqq 0,\quad H_{0,s}^{(n)}\coloneqq 1,\\
& H_{k,s}^{(n)}\coloneqq\left\vert \begin{array}{cccc}
 f_{0,s}^{(n)} & f_{1,s}^{(n)} & \cdots & f_{k-1,s}^{(n)}\\
 f_{1,s}^{(n)} & f_{2,s}^{(n)} & \cdots & f_{k,s}^{(n)}\\
 \vdots & \vdots & \ddots & \vdots\\
 f_{k-1,s}^{(n)} & f_{k,s}^{(n)} & \cdots & f_{2k-2,s}^{(n)}
\end{array}\right\vert,\quad k=1,2,\dots,m,\quad n=0,1,\dots, 
\end{aligned}
\end{align}
we also obtain two propositions for the Hankel determinants $H_{k,s}^{(n)}$. 
\begin{proposition}(cf.~\cite{Shinjo_2018})
For each $s$, the Hankel determinants $H_{k,s}^{(n)}$ satisfy:
\begin{align}
\label{prop:determinant_extended}
H_{k,s}^{(n)}=
\left\{
\begin{aligned}
&\sum_{1\leq i_1<i_2<\cdots <i_k\leq m} { \cal C}_{i_1,i_2,\dots,i_k;s}\rho_{i_1,s}^{(n)}\rho_{i_2,s}^{(n)}\cdots\rho_{i_k,s}^{(n)},\quad k=1,2,\dots,m,\\
&0,\quad k=m+1,\\
\end{aligned}
\right.
\end{align}
where:
\begin{align}
\label{eqn:determinant_extended_C}
{\cal C}_{i_1,i_2,\dots,i_k;s}\coloneqq\left\vert\begin{array}{cccc}
1 & 1 & \cdots & 1 \\
\lambda_{i_1,s} & \lambda_{i_2,s} & \cdots & \lambda_{i_k,s} \\
\vdots & \vdots & \ddots & \vdots \\
\lambda_{i_1,s}^{k-1} & \lambda_{i_1,s}^{k-1} & \cdots & \lambda_{i_k,s}^{k-1}
\end{array}\right\vert\left\vert\begin{array}{cccc}
c_{i_1,s} & c_{i_1,s} \lambda_{i_1,s} & \cdots & c_{i_1,s} \lambda_{i_1,s}^{k-1} \\
c_{i_2,s} & c_{i_2,s} \lambda_{i_2,s} & \cdots & c_{i_2,s} \lambda_{i_2,s}^{k-1} \\
\vdots & \vdots & \ddots & \vdots \\
c_{i_k,s} & c_{i_k,s} \lambda_{i_k,s} & \cdots & c_{i_k,s} \lambda_{i_k,s}^{k-1}
\end{array}\right\vert.
\end{align}
\end{proposition}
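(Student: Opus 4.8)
The plan is to exploit the separable (rank-$m$) structure of the moment matrix furnished by the representation \eqref{eqn:moment_expressed}, namely $f_{k,s}^{(n)}=\sum_{\ell=1}^m c_{\ell,s}\lambda_{\ell,s}^{\,k}\rho_{\ell,s}^{(n)}$, and then to apply the Cauchy--Binet formula. First I would factor the $k\times k$ Hankel matrix whose $(i,j)$ entry is $f_{i+j,s}^{(n)}$ (with $i,j=0,1,\dots,k-1$) as a triple product $V_s\,D_s^{(n)}\,V_s^{\top}$, where $V_s$ is the $k\times m$ matrix with entries $(V_s)_{i+1,\ell}=\lambda_{\ell,s}^{\,i}$ and $D_s^{(n)}=\mathrm{diag}\bigl(c_{1,s}\rho_{1,s}^{(n)},\dots,c_{m,s}\rho_{m,s}^{(n)}\bigr)$. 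A direct computation of the $(i+1,j+1)$ entry of this product reproduces exactly $\sum_{\ell=1}^m c_{\ell,s}\rho_{\ell,s}^{(n)}\lambda_{\ell,s}^{\,i+j}=f_{i+j,s}^{(n)}$, so the factorization is valid for every $k\le m+1$; the largest moment index occurring in $H_{m+1,s}^{(n)}$ is $f_{2m,s}^{(n)}$, which still lies in the range $\{0,\dots,2m\}$ on which \eqref{eqn:moment_expressed} holds.

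Next I would apply the Cauchy--Binet formula to $\det\bigl(V_s\,(D_s^{(n)}V_s^{\top})\bigr)$. For $k\le m$ this produces a sum over all index sets $1\le i_1<i_2<\cdots<i_k\le m$ of the product of the corresponding $k\times k$ minor of $V_s$ and the matching $k\times k$ minor of $D_s^{(n)}V_s^{\top}$. The minor of $V_s$ selected by $\{i_1,\dots,i_k\}$ is precisely the Vandermonde determinant in $\lambda_{i_1,s},\dots,\lambda_{i_k,s}$ appearing as the first factor of ${\cal C}_{i_1,\dots,i_k;s}$ in \eqref{eqn:determinant_extended_C}. In the matching minor of $D_s^{(n)}V_s^{\top}$, row $t$ carries the common factor $c_{i_t,s}\rho_{i_t,s}^{(n)}$; I would extract $\rho_{i_t,s}^{(n)}$ from all $k$ rows to form the product $\rho_{i_1,s}^{(n)}\cdots\rho_{i_k,s}^{(n)}$, while retaining the $c_{i_t,s}$ inside, so that the residual determinant is exactly the second factor $\bigl|\,c_{i_t,s}\lambda_{i_t,s}^{\,j}\,\bigr|$ of ${\cal C}_{i_1,\dots,i_k;s}$. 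Multiplying the two minors then gives ${\cal C}_{i_1,\dots,i_k;s}\,\rho_{i_1,s}^{(n)}\cdots\rho_{i_k,s}^{(n)}$ per index set, which is precisely the expansion \eqref{prop:determinant_extended} for $k=1,2,\dots,m$.

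Finally, for $k=m+1$ the factorization $H_{m+1,s}^{(n)}=V_s\,D_s^{(n)}\,V_s^{\top}$ still holds, but now $V_s$ has only $m$ columns while the Hankel matrix is $(m+1)\times(m+1)$; hence $\mathrm{rank}\,H_{m+1,s}^{(n)}\le m<m+1$ and the determinant vanishes. Equivalently, Cauchy--Binet yields an empty sum because $\{1,\dots,m\}$ has no $(m+1)$-element subset, so $H_{m+1,s}^{(n)}=0$. I expect the only delicate point to be the index bookkeeping in the Cauchy--Binet step, namely verifying that the selected minors of $V_s$ and of $D_s^{(n)}V_s^{\top}$ align with the two determinantal factors of ${\cal C}_{i_1,\dots,i_k;s}$ after the $\rho$-factors are pulled out; once the factorization $H=V_s\,D_s^{(n)}\,V_s^{\top}$ is in place, the remainder is routine.
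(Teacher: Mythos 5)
Your proof is correct and is essentially the standard argument that the paper delegates to the cited reference \cite{Shinjo_2018}: factor the Hankel matrix as $V_s D_s^{(n)} V_s^{\top}$ using the moment representation \eqref{eqn:moment_expressed} and apply Cauchy--Binet, with the $k=m+1$ case vanishing because $\{1,\dots,m\}$ has no $(m+1)$-element subset. The index bookkeeping you flag does work out as you describe (the first factor of ${\cal C}_{i_1,\dots,i_k;s}$ is the selected Vandermonde minor of $V_s$ and the second is the matching minor of $D_s^{(n)}V_s^{\top}$ after extracting the $\rho$-factors), so there is nothing to repair.
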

Let us assume that $H_{k,s}^{(n)}\neq 0$ for $k=1,2,\dots,m$ and $s=0,1,\dots,M$. 
Let us define the $k$th degree polynomials with respect to $z$ as:
\begin{align}
\label{eqn:orthogonal_polynomial}
{\cal H}_{k,s}^{(n)}(z)=\dfrac{H_{k,s}^{(n)}(z)}{H_{k,s}^{(n)}},\quad k=1,2,\dots,m, 
\end{align}
where
\begin{align}
\label{eqn:determinant_func}
\begin{aligned}
& H_{-1,s}^{(n)}(z)\coloneqq 0,\quad H_{0,s}^{(n)}(z)\coloneqq 1,\\
& H_{k,s}^{(n)}(z)\coloneqq\left\vert\begin{array}{cccc}
f_{0,s}^{(n)} & f_{1,s}^{(n)} & \cdots & f_{k,s}^{(n)} \\
f_{1,s}^{(n)} & f_{2,s}^{(n)} & \cdots & f_{k+1,s}^{(n)} \\
\vdots & \vdots & \ddots & \vdots \\
f_{k-1,s}^{(n)} & f_{k,s}^{(n)} & \cdots & f_{2k-1,s}^{(n)} \\
1 & z & \cdots & z^k
\end{array}\right\vert,\quad k=1,2,\dots,m,\quad n=0,1,\dots. 
\end{aligned}
\end{align}
\noindent
With the help of \yadd{\cite{Shinjo_2018}}, we immediately derive two lemmas concerning the polynomials ${\cal H}_{k,s}^{(n)}(z)$.
\begin{lemma}(cf.~\cite{Shinjo_2018})
Let us assume that $H_{k,s}^{(n)}\neq 0$ and $\mu^{(n)}=-1/ [(1-q)t^{(n)}]$. 
For each $s$, the Hadamard polynomials ${\cal H}_{k,s}^{(n)}$ satisfy:
\begin{align}
\left\{\begin{aligned}
& (z-\mu^{(n)}){\cal H}_{k-1,s}^{(n+1)}(z)={\cal H}_{k,s}^{(n)}(z)+Q_{k,s}^{(n)}{\cal H}_{k-1,s}^{(n)}(z),\quad k=1,2,\dots,m,\\
& {\cal H}_{k,s}^{(n)}(z)={\cal H}_{k,s}^{(n+1)}(z)+E_{k,s}^{(n)}{\cal H}_{k-1,s}^{(n+1)}(z),\quad k=0,1,\dots,m, 
\end{aligned}\right.
\end{align}
where
\begin{align}
\left\{\begin{aligned}
\label{eqn:auxiliary_variable_determinantal_solution}
& Q_{k,s}^{(n)}=\dfrac{H_{k,s}^{(n+1)}H_{k-1,s}^{(n)}}{H_{k,s}^{(n)}H_{k-1,s}^{(n+1)}},\quad k=1,2,\dots,m,\quad s=0,1,\dots,M,\\
& E_{k,s}^{(n)}=\dfrac{H_{k+1,s}^{(n)}H_{k-1,s}^{(n+1)}}{H_{k,s}^{(n)}H_{k,s}^{(n+1)}},\quad k=0,1,\dots,m,\quad s=0,1,\dots,M.
\end{aligned}\right.
\end{align}
Moreover, ${\cal H}_{m,s}^{(n)}(z)$ are just $p_s(z)$, which are characteristic polynomials of matrices with eigenvalues $\lambda_{1,s},\lambda_{2,s},\ldots,\lambda_{m,s}$.
\end{lemma}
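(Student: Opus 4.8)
The plan is to treat the index $s$ as a spectator. For each fixed $s$, the moments $f_{k,s}^{(n)}$, their linear recurrence \eqref{eqn:moment_linear_equation}, the time evolution \eqref{eqn:moment_time_evolution}, the Hankel determinants \eqref{eqn:determinant}, and the polynomials \eqref{eqn:determinant_func}, \eqref{eqn:orthogonal_polynomial} form exactly the single-index configuration studied in \cite{Shinjo_2018}. Hence the two displayed recurrences and the determinant formulas \eqref{eqn:auxiliary_variable_determinantal_solution} for $Q_{k,s}^{(n)}$ and $E_{k,s}^{(n)}$ can be obtained by transcribing the corresponding result of \cite{Shinjo_2018} with every quantity carrying the extra subscript $s$. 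First I would record this equivalence explicitly, so that the citation is airtight and the whole of the cited machinery transfers verbatim.

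Second, to see why those recurrences hold, I would introduce the linear functional $\mathcal{L}_s^{(n)}$ determined by $\mathcal{L}_s^{(n)}[z^k]=f_{k,s}^{(n)}$, so that the ${\cal H}_{k,s}^{(n)}(z)$ of \eqref{eqn:orthogonal_polynomial} is the monic degree-$k$ orthogonal polynomial for $\mathcal{L}_s^{(n)}$. With the choice $\mu^{(n)}=-1/[(1-q)t^{(n)}]$, the time evolution \eqref{eqn:moment_time_evolution} reads $f_{k,s}^{(n+1)}=\mathcal{L}_s^{(n)}[(z-\mu^{(n)})z^k]$, i.e.\ $\mathcal{L}_s^{(n+1)}=(z-\mu^{(n)})\,\mathcal{L}_s^{(n)}$ is the Christoffel transform of $\mathcal{L}_s^{(n)}$. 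The two displayed relations are then precisely the Christoffel and dual-Christoffel identities linking the orthogonal polynomials of $\mathcal{L}_s^{(n)}$ and $\mathcal{L}_s^{(n+1)}$: the first expands $(z-\mu^{(n)}){\cal H}_{k-1,s}^{(n+1)}$ in $\{{\cal H}_{k,s}^{(n)},{\cal H}_{k-1,s}^{(n)}\}$ and the second expands ${\cal H}_{k,s}^{(n)}$ in $\{{\cal H}_{k,s}^{(n+1)},{\cal H}_{k-1,s}^{(n+1)}\}$. Evaluating the first at $z=\mu^{(n)}$ and comparing subleading coefficients in the second, together with the Desnanot--Jacobi (Sylvester) identity for the Hankel determinants \eqref{eqn:determinant}, yields the closed forms \eqref{eqn:auxiliary_variable_determinantal_solution}.

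For the final assertion ${\cal H}_{m,s}^{(n)}(z)=p_s(z)$, I would argue directly from \eqref{eqn:determinant_func} by a column reduction. In the $(m+1)\times(m+1)$ determinant $H_{m,s}^{(n)}(z)$, label the columns $0,1,\dots,m$ and add $a_{i,s}$ times column $m-i$ to column $m$ for $i=1,2,\dots,m$, where the $a_{i,s}$ are the coefficients in \eqref{eqn:polynomial_extended}. In the moment rows this replaces the last entry of row $r$ by $f_{r+m,s}^{(n)}+\sum_{i=1}^m a_{i,s}f_{r+m-i,s}^{(n)}$, which vanishes for $r=0,1,\dots,m-1$ by the linear recurrence \eqref{eqn:moment_linear_equation}; in the bottom row it produces $z^m+\sum_{i=1}^m a_{i,s}z^{m-i}=p_s(z)$. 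Expanding along the new last column leaves only the bottom entry, whose cofactor is the Hankel determinant $H_{m,s}^{(n)}$ of \eqref{eqn:determinant}, so $H_{m,s}^{(n)}(z)=p_s(z)\,H_{m,s}^{(n)}$ and hence ${\cal H}_{m,s}^{(n)}(z)=p_s(z)$. Since \eqref{eqn:moment_expressed} shows $f_{k,s}^{(n)}$ is annihilated by $p_s$ at each $\lambda_{\ell,s}$, the recurrence \eqref{eqn:moment_linear_equation} holds for every $n$, so this identity is valid for all $n$, and the roots $\lambda_{1,s},\dots,\lambda_{m,s}$ are by construction the eigenvalues named in \eqref{eqn:polynomial}.

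The routine recurrences transfer verbatim from \cite{Shinjo_2018}, so the only genuinely new step is this last factorization, and I expect the main obstacle to be bookkeeping rather than mathematics: checking that the column-operation signs and the cofactor expansion are correct (the bottom-right cofactor carries sign $(-1)^{2m+2}=+1$, so no stray sign appears) and confirming that the $s$-indexed data meet the hypotheses of \cite{Shinjo_2018} exactly, so that the cited relations and the determinant formulas \eqref{eqn:auxiliary_variable_determinantal_solution} apply without modification.
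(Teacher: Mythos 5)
Your proposal is correct and matches the paper's treatment: the paper offers no proof of this lemma beyond the observation that, for each fixed $s$, the moment data \eqref{eqn:moment_linear_equation}--\eqref{eqn:moment_time_evolution} coincide with the single-index setting of \cite{Shinjo_2018}, which is exactly your first paragraph. Your additional material --- the Christoffel-transform derivation of the two recurrences and the column-reduction argument showing $H_{m,s}^{(n)}(z)=p_s(z)\,H_{m,s}^{(n)}$ via \eqref{eqn:moment_linear_equation} --- is a correct, self-contained supplement to what the paper leaves implicit.
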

Combining \eqref{eqn:auxiliary_variable_determinantal_solution} with \eqref{eqn:auxiliary_recursion_formula_qdToda_Q} and \eqref{eqn:auxiliary_recursion_formula_qdToda_E}, we thus obtain the determinantal solution to the extended $q$-discrete Toda equation \eqref{eqn:extended_qdToda}.
\begin{theorem}
\label{thm:solutions_extended_qdToda}
The extended $q$-discrete Toda variables $x_{k,s}^{(n)}$ and $y_{k,s}^{(n)}$ can be expressed using the Hankel determinants $H_{k,s}^{(n)}$ as
\begin{align}
\label{eqn:qdToda_solution}
\begin{aligned}
& x_{k,s}^{(n)}=\dfrac{H_{k,s}^{(n+1)}H_{k-1,s}^{(n)}}{H_{k,s}^{(n)}H_{k-1,s}^{(n+1)}}
+\dfrac{H_{k,s}^{(n)}H_{k-2,s}^{(n+1)}}{H_{k-1,s}^{(n)}H_{k-1,s}^{(n+1)}}-\dfrac{1}{(1-q)t^{(n)}},\quad k=1,2,\dots,m,\\
& y_{k,s}^{(n)}=\dfrac{H_{k-1,s}^{(n)}H_{k+1,s}^{(n)}}{(H_{k,s}^{(n)})^2},\quad k=0,1,\dots,m.
\end{aligned}
\end{align}
\end{theorem}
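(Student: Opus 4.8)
The plan is to assemble the two claimed formulas directly from the determinantal expressions for $Q_{k,s}^{(n)}$ and $E_{k,s}^{(n)}$ given in \eqref{eqn:auxiliary_variable_determinantal_solution}, together with the relations \eqref{eqn:auxiliary_variable_qdToda} (or equivalently \eqref{eqn:auxiliary_recursion_formula_qdToda_Q} and \eqref{eqn:auxiliary_recursion_formula_qdToda_E}) that connect $Q_{k,s}^{(n)}$ and $E_{k,s}^{(n)}$ back to the original Toda variables $x_{k,s}^{(n)}$ and $y_{k,s}^{(n)}$. In effect, the preceding lemma has already done the analytic work; what remains is to invert the change of variables and substitute.

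\textbf{Derivation of $y_{k,s}^{(n)}$.} First I would recall from \eqref{eqn:auxiliary_recursion_formula_qdToda_E} that $y_{k,s}^{(n)}=Q_{k,s}^{(n)}E_{k,s}^{(n)}$, since that equation states $Q_{k,s}^{(n)}-y_{k,s}^{(n)}/E_{k,s}^{(n)}=0$. Substituting the Hankel-determinant formulas for $Q_{k,s}^{(n)}$ and $E_{k,s}^{(n)}$ from \eqref{eqn:auxiliary_variable_determinantal_solution} gives
\begin{align*}
y_{k,s}^{(n)}=Q_{k,s}^{(n)}E_{k,s}^{(n)}
=\dfrac{H_{k,s}^{(n+1)}H_{k-1,s}^{(n)}}{H_{k,s}^{(n)}H_{k-1,s}^{(n+1)}}\cdot\dfrac{H_{k+1,s}^{(n)}H_{k-1,s}^{(n+1)}}{H_{k,s}^{(n)}H_{k,s}^{(n+1)}},
\end{align*}
and after the factors $H_{k,s}^{(n+1)}$ and $H_{k-1,s}^{(n+1)}$ cancel between numerator and denominator, this collapses to $H_{k-1,s}^{(n)}H_{k+1,s}^{(n)}/(H_{k,s}^{(n)})^2$, which is exactly the second formula in \eqref{eqn:qdToda_solution}. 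I would then check the boundary cases $k=0$ and $k=m$ separately, using $H_{-1,s}^{(n)}=0$ and $H_{m+1,s}^{(n)}=0$ (the latter from the vanishing case of the Hankel determinants) to confirm $y_{0,s}^{(n)}=0$ and $y_{m,s}^{(n)}=0$ in agreement with the boundary conditions of \eqref{eqn:extended_qdToda}.

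\textbf{Derivation of $x_{k,s}^{(n)}$.} For the first formula I would start from the defining relation \eqref{eqn:auxiliary_variable_qdToda}, namely $x_{k,s}^{(n)}=Q_{k,s}^{(n)}-1/[(1-q)t^{(n)}]+(1-q)t^{(n)}g_{k-1,s}^{(n)}$, and observe that $(1-q)t^{(n)}g_{k-1,s}^{(n)}=E_{k-1,s}^{(n)}$ by the second line of \eqref{eqn:auxiliary_variable_qdToda}. Hence $x_{k,s}^{(n)}=Q_{k,s}^{(n)}+E_{k-1,s}^{(n)}-1/[(1-q)t^{(n)}]$. Substituting the determinantal expressions for $Q_{k,s}^{(n)}$ and $E_{k-1,s}^{(n)}$ from \eqref{eqn:auxiliary_variable_determinantal_solution} (replacing $k$ by $k-1$ in the $E$-formula, which yields $H_{k,s}^{(n)}H_{k-2,s}^{(n+1)}/[H_{k-1,s}^{(n)}H_{k-1,s}^{(n+1)}]$) reproduces term by term the first line of \eqref{eqn:qdToda_solution}. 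The main thing to verify carefully is this index shift in the $E$-formula and the consistency of the sign and placement of the $1/[(1-q)t^{(n)}]$ shift term, since that is where an off-by-one or sign slip would most easily occur; the cancellations themselves are routine. I expect no serious obstacle, as the statement is essentially a bookkeeping consequence of the earlier lemma and the variable changes already established, but the boundary index $k=1$ (where $H_{k-2,s}^{(n+1)}=H_{-1,s}^{(n+1)}=0$) should be confirmed to give the correct value of $x_{1,s}^{(n)}$.
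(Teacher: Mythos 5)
Your proposal is correct and follows essentially the same route as the paper, which obtains the theorem by combining the determinantal expressions \eqref{eqn:auxiliary_variable_determinantal_solution} for $Q_{k,s}^{(n)}$ and $E_{k,s}^{(n)}$ with the relations \eqref{eqn:auxiliary_recursion_formula_qdToda_Q} and \eqref{eqn:auxiliary_recursion_formula_qdToda_E}; you have simply written out the substitution and cancellation that the paper leaves implicit. The index shift in the $E$-formula, the sign of the shift term, and the boundary checks via $H_{-1,s}^{(n)}=0$ and $H_{m+1,s}^{(n)}=0$ are all handled correctly.
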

\noindent
According to \yadd{\cite{Shinjo_2018}}, asymptotic expansions as $n\to\infty$ of the Hankel determinants are given as the following lemma.
\begin{lemma}(cf.~\cite{Shinjo_2018})
\label{lemma:asymptotic_behavior}
Let us assume that $\vert\lambda_{1,s}-\mu^{(n)}\vert >\vert\lambda_{2,s}-\mu^{(n)}\vert >\cdots >\vert\lambda_{m,s}-\mu^{(n)}\vert$, 
and $\varrho_{k,s}$ are constants that satisfy $0\leq\varrho_{k,s}<1$ and 
$\varrho_{k,s}> \vert\lambda_{k+1,s}-\mu^{(n)}\vert /\vert\lambda_{k,s}-\mu^{(n)}\vert$. 
For sufficiently large $n$, it holds that
\[H_{k,s}^{(n)}={\cal C}_{1,2,\dots,k;s}\rho_{1,s}^{(n)}\rho_{2,s}^{(n)}\cdots\rho_{k,s}^{(n)}\left(1+O(\varrho_{k,s}^{n})\right).\]
\end{lemma}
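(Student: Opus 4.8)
The plan is to start from the closed-form expansion of the Hankel determinant already recorded in \eqref{prop:determinant_extended},
\[H_{k,s}^{(n)}=\sum_{1\le i_1<\cdots<i_k\le m}{\cal C}_{i_1,\dots,i_k;s}\,\rho_{i_1,s}^{(n)}\cdots\rho_{i_k,s}^{(n)},\]
and to isolate the single dominant summand, the one indexed by $\{1,2,\dots,k\}$. First I would factor that term out, writing
\[H_{k,s}^{(n)}={\cal C}_{1,\dots,k;s}\,\rho_{1,s}^{(n)}\cdots\rho_{k,s}^{(n)}\Bigl(1+\sum_{I\ne\{1,\dots,k\}}\frac{{\cal C}_{I;s}}{{\cal C}_{1,\dots,k;s}}\,\frac{\rho_{i_1,s}^{(n)}\cdots\rho_{i_k,s}^{(n)}}{\rho_{1,s}^{(n)}\cdots\rho_{k,s}^{(n)}}\Bigr),\]
so the whole claim reduces to showing that the bracketed factor equals $1+O(\varrho_{k,s}^{n})$. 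This factoring is legitimate because ${\cal C}_{1,\dots,k;s}$ splits via \eqref{eqn:determinant_extended_C} into a product of a Vandermonde determinant and $(\prod_{j}c_{j,s})$ times a second Vandermonde; distinctness of the $\lambda_{\ell,s}$ together with nonvanishing of the relevant $c_{\ell,s}$ (equivalently, genericity of the initial data in \eqref{eqn:c_condition}) forces ${\cal C}_{1,\dots,k;s}\ne0$.

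Next I would estimate each ratio of $\rho$-products. Using \eqref{eqn:rho},
\[\frac{\rho_{\ell,s}^{(n)}}{\rho_{j,s}^{(n)}}=\prod_{i=0}^{n-1}\frac{\lambda_{\ell,s}-\mu^{(i)}}{\lambda_{j,s}-\mu^{(i)}},\]
so the modulus of every factor is strictly below $1$ whenever $\ell>j$, by the hypothesized ordering $|\lambda_{1,s}-\mu^{(n)}|>\cdots>|\lambda_{m,s}-\mu^{(n)}|$. Consequently $|\rho_{\ell,s}^{(n)}|$ decreases in $\ell$, so the products $\prod_{i\in I}|\rho_{i,s}^{(n)}|$ are maximised at $I=\{1,\dots,k\}$, confirming it is the leading term. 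A short combinatorial check then shows the \emph{largest} of the remaining ratios is attained by the single swap $I=\{1,\dots,k-1,k+1\}$, for which the ratio collapses to $\rho_{k+1,s}^{(n)}/\rho_{k,s}^{(n)}$: every other subset either replaces an index by a strictly larger one or performs more than one replacement, and hence yields a strictly smaller modulus.

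It then remains to bound this leading error geometrically. Because $t^{(n)}=q^{-n}t^{(0)}\to\infty$ forces $\mu^{(n)}=-1/[(1-q)t^{(n)}]\to0$, each factor $|\lambda_{k+1,s}-\mu^{(i)}|/|\lambda_{k,s}-\mu^{(i)}|$ converges to $|\lambda_{k+1,s}|/|\lambda_{k,s}|$; choosing $\varrho_{k,s}$ strictly above this limit, exactly as in the hypothesis, ensures that beyond some index $N_0$ every factor lies below $\varrho_{k,s}$. Splitting the product at $N_0$, the initial block contributes only a fixed constant while the tail is bounded by $\varrho_{k,s}^{\,n-N_0}$, so $|\rho_{k+1,s}^{(n)}/\rho_{k,s}^{(n)}|=O(\varrho_{k,s}^{n})$; all other competing ratios are then $O(\varrho_{k,s}^{n})$ as well, and the finite sum of them stays $O(\varrho_{k,s}^{n})$. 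Combined with the factored expression this yields the asserted expansion. The main obstacle is precisely this last step: because $\mu^{(i)}$ genuinely depends on $i$ (unlike a constant shift), each factor varies, and one must argue that only finitely many of them can exceed $\varrho_{k,s}$ and that this finitely-many exception suffices for uniform geometric decay across all competing subsets at once.
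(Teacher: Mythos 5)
Your argument is correct, but note that the paper itself offers no proof of this lemma: it is stated with ``cf.'' and deferred entirely to the cited work of Shinjo et al., so there is no in-paper argument to compare against. What you supply is the natural (and surely intended) proof: expand $H_{k,s}^{(n)}$ via \eqref{prop:determinant_extended}, factor out the term indexed by $\{1,\dots,k\}$, and bound every competing ratio of $\rho$-products by $\vert\rho_{k+1,s}^{(n)}/\rho_{k,s}^{(n)}\vert$ through the pairing/swap argument; since the coefficients ${\cal C}_{I;s}$ are finitely many $n$-independent constants, the finite sum of errors stays $O(\varrho_{k,s}^{n})$. Two remarks. First, your final step is more elaborate than necessary: the hypothesis $\varrho_{k,s}>\vert\lambda_{k+1,s}-\mu^{(n)}\vert/\vert\lambda_{k,s}-\mu^{(n)}\vert$ is stated for every $n$, so each factor of $\prod_{i=0}^{n-1}(\lambda_{k+1,s}-\mu^{(i)})/(\lambda_{k,s}-\mu^{(i)})$ is already below $\varrho_{k,s}$ in modulus and the bound $\varrho_{k,s}^{n}$ is immediate, with no need to invoke $\mu^{(n)}\to 0$ or to split the product at an $N_{0}$; your version does, however, cover the weaker reading in which the hypothesis holds only for large $n$. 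Second, you are right to flag that ${\cal C}_{1,\dots,k;s}\neq 0$ is needed and is not among the lemma's stated hypotheses; it follows from \eqref{eqn:determinant_extended_C} given distinct eigenvalues and nonvanishing $c_{\ell,s}$, and in the paper's setting it is implicitly guaranteed by the standing assumption $H_{k,s}^{(n)}\neq 0$ together with the expansion itself. With that caveat made explicit, your proof is complete and fills the gap the citation leaves.
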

From Theorem \ref{thm:solutions_extended_qdToda} and Lemma \ref{lemma:asymptotic_behavior}, 
we therefore have the convergence theorem in the extended $q$-Toda equation \eqref{eqn:extended_qdToda}. 
\begin{theorem}
For each $s$, let us assume that the moment sequence $\{f_{k,s}^{(n)}\}_{n}^{\infty}$ satisfies $H_{k,s}^{(n)}\neq0$, 
and $\vert\lambda_{1,s}-\mu^{(n)}\vert >\vert\lambda_{2,s}-\mu^{(n)}\vert >\cdots >\vert\lambda_{m,s}-\mu^{(n)}\vert$. 
Then, it holds that
\begin{align}
\label{eqn:qdToda_solution_x_converge}
& \lim_{n\to\infty}x_{k,s}^{(n)}=\lambda_{k,s},\quad k=1,2,\dots,m,\\
\label{eqn:qdToda_solution_y_converge}
& \lim_{n\to\infty}y_{k,s}^{(n)}= 0,\quad k=1,2,\dots,m-1.
\end{align}
\end{theorem}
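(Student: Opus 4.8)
The plan is to read off both limits directly from the determinantal solution in Theorem~\ref{thm:solutions_extended_qdToda} after inserting the asymptotic expansion of the Hankel determinants from Lemma~\ref{lemma:asymptotic_behavior}. First I would record, using \eqref{eqn:auxiliary_variable_determinantal_solution}, that the $x$-component splits exactly as
\[
x_{k,s}^{(n)}=Q_{k,s}^{(n)}+E_{k-1,s}^{(n)}+\mu^{(n)},\qquad \mu^{(n)}=-\tfrac{1}{(1-q)t^{(n)}},
\]
with $Q_{k,s}^{(n)}=H_{k,s}^{(n+1)}H_{k-1,s}^{(n)}/(H_{k,s}^{(n)}H_{k-1,s}^{(n+1)})$ and $E_{k-1,s}^{(n)}=H_{k,s}^{(n)}H_{k-2,s}^{(n+1)}/(H_{k-1,s}^{(n)}H_{k-1,s}^{(n+1)})$. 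Since $t^{(n)}=q^{-n}t^{(0)}\to\infty$ under $0<q<1$, the shift satisfies $\mu^{(n)}\to 0$; this fact will be used repeatedly to keep $\mu^{(n)}$-dependent quantities bounded.

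For \eqref{eqn:qdToda_solution_y_converge} I would substitute $H_{k,s}^{(n)}={\cal C}_{1,\dots,k;s}\rho_{1,s}^{(n)}\cdots\rho_{k,s}^{(n)}(1+O(\varrho_{k,s}^{n}))$ into $y_{k,s}^{(n)}=H_{k-1,s}^{(n)}H_{k+1,s}^{(n)}/(H_{k,s}^{(n)})^{2}$. All of the $\rho$-products cancel save one, leaving a constant times the single ratio $\rho_{k+1,s}^{(n)}/\rho_{k,s}^{(n)}=\prod_{j=0}^{n-1}(\lambda_{k+1,s}-\mu^{(j)})/(\lambda_{k,s}-\mu^{(j)})$ and a $(1+o(1))$ factor. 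By hypothesis each factor of the product has modulus at most $\varrho_{k,s}<1$, so the product is $O(\varrho_{k,s}^{n})$ and $y_{k,s}^{(n)}\to 0$.

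For \eqref{eqn:qdToda_solution_x_converge} the same substitution telescopes the $\rho$-products in $Q_{k,s}^{(n)}$ down to $\rho_{k,s}^{(n+1)}/\rho_{k,s}^{(n)}=\lambda_{k,s}-\mu^{(n)}$, so $Q_{k,s}^{(n)}=(\lambda_{k,s}-\mu^{(n)})(1+O(\varrho^{n}))$ and hence $Q_{k,s}^{(n)}+\mu^{(n)}=\lambda_{k,s}+(\lambda_{k,s}-\mu^{(n)})O(\varrho^{n})$; because $\mu^{(n)}\to 0$ the prefactor stays bounded and this error vanishes. In parallel the $\rho$-products in $E_{k-1,s}^{(n)}$ reduce to $\rho_{k,s}^{(n)}/\rho_{k-1,s}^{(n+1)}=\bigl(\prod_{j=0}^{n-1}(\lambda_{k,s}-\mu^{(j)})/(\lambda_{k-1,s}-\mu^{(j)})\bigr)\cdot 1/(\lambda_{k-1,s}-\mu^{(n)})$, whose product part is $O(\varrho_{k-1,s}^{n})$; together with the boundary value $E_{0,s}^{(n)}=0$ at $k=1$, this gives $E_{k-1,s}^{(n)}\to 0$. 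Collecting the three pieces then yields $x_{k,s}^{(n)}\to\lambda_{k,s}$.

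I expect the error bookkeeping in the $x$-limit to be the main obstacle, specifically the stray factor $1/(\lambda_{k-1,s}-\mu^{(n)})$ appearing in $E_{k-1,s}^{(n)}$: if $\lambda_{k-1,s}=0$ this factor grows like $(1-q)t^{(n)}\sim q^{-n}$ and could, in principle, swamp the decay $O(\varrho_{k-1,s}^{n})$. The resolution is that the ordering hypothesis $|\lambda_{1,s}-\mu^{(n)}|>\cdots>|\lambda_{m,s}-\mu^{(n)}|$, which persists as $\mu^{(n)}\to 0$, together with the distinctness of the roots, forces $\lambda_{j,s}\neq 0$ for every $j\le m-1$; only the smallest root $\lambda_{m,s}$ may vanish, and it never enters the troublesome factor because $k-1\le m-1$. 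I would therefore isolate this index bound, confirm that $\lambda_{k-1,s}-\mu^{(n)}$ is bounded away from $0$ for the relevant indices, and conclude that every remaining error term is genuinely $o(1)$, which completes the proof.
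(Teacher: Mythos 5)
Your proposal follows essentially the same route as the paper's proof: substitute the asymptotic expansion of $H_{k,s}^{(n)}$ from Lemma~\ref{lemma:asymptotic_behavior} into the determinantal formulas of Theorem~\ref{thm:solutions_extended_qdToda}, telescope the $\rho$-ratios to $\rho_{k,s}^{(n+1)}/\rho_{k,s}^{(n)}=\lambda_{k,s}-\mu^{(n)}\to\lambda_{k,s}$ and to decaying products for the off-diagonal terms, and use $\mu^{(n)}\to 0$. Your extra observation that the ordering hypothesis together with distinctness of the roots keeps the stray factor $1/(\lambda_{k-1,s}-\mu^{(n)})$ bounded is correct and fills in a detail the paper leaves implicit, but it does not change the argument.
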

\begin{proof}
Using Lemma \ref{lemma:asymptotic_behavior} and 
considering $1/[(1-q)t^{(n)}]\to 0$ as $n\to\infty$ in Theorem \ref{thm:solutions_extended_qdToda}, we obtain
\begin{align*}
x_{k,s}^{(n)} & =\dfrac{\rho_{k,s}^{(n+1)}}{\rho_{k,s}^{(n)}}
\dfrac{\left(1+O(\varrho_{k,s}^{n+1})\right)\left(1+O(\varrho_{k-1,s}^{n})\right)}
{\left(1+O(\varrho_{k,s}^{n})\right)\left(1+O(\varrho_{k-1,s}^{n+1})\right)}\\
& \quad +\dfrac{{\cal C}_{1,2,\dots,k;s}{\cal C}_{1,2,\dots,k-2;s}}{{\cal C}_{1,2,\dots,k-1;s}^2}\dfrac{\rho_{k,s}^{(n)}}{\rho_{k-1,s}^{(n+1)}}
\dfrac{\left(1+O(\varrho_{k,s}^{n})\right)\left(1+O(\varrho_{k-2,s}^{n+1})\right)}
{\left(1+O(\varrho_{k-1,s}^{n})\right)\left(1+O(\varrho_{k-1,s}^{n+1})\right)}, \\
y_{k,s}^{(n)} & =\dfrac{{\cal C}_{1,2,\dots,k+1;s}{\cal C}_{1,2,\dots,k-1;s}}{{\cal C}_{1,2,\dots,k;s}^2}
\dfrac{\rho_{k+1,s}^{(n)}}{\rho_{k,s}^{(n)}}
\dfrac{\left(1+O(\varrho_{k+1,s}^{n})\right)\left(1+O(\varrho_{k-1,s}^{n})\right)}{\left(1+O(\varrho_{k,s}^{n})\right)^2}. 
\end{align*}
Thus, we derive \eqref{eqn:qdToda_solution_x_converge} and \eqref{eqn:qdToda_solution_y_converge}. 
\end{proof}
Considering $x_{k,s}^{(n)}\to\lambda_{k,s}$ and $y_{k,s}^{(n)}\to0$ as $n\to\infty$ in Proposition \ref{prop:discrete_Backlund}, we also have a convergence theorem in the extended $q$-discrete LV system \eqref{eqn:extended_q_dLV}.
\begin{theorem}
For each $s$, let us assume that the moment sequence $\{f_{k,s}^{(n)}\}_{n}^{\infty}$ satisfies $H_{k,s}^{(n)}\neq0$, and $\vert\lambda_{1,s}-\mu^{(n)}\vert >\vert\lambda_{2,s}-\mu^{(n)}\vert >\cdots >\vert\lambda_{m,s}-\mu^{(n)}\vert$. 
Then, it holds that
\begin{align*}
& \lim_{n\to\infty}u_{2k-1,s}^{(n)}=\lambda_{k,s},\quad k=1,2,\dots,m,\\
\label{eqn:qdToda_solution_y_converge}
& \lim_{n\to\infty}u_{2k,s}^{(n)}= 0,\quad k=1,2,\dots,m-1.
\end{align*}
\end{theorem}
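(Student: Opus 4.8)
The plan is to derive the convergence of the extended $q$-discrete LV variables directly from the already-established convergence of the extended $q$-discrete Toda variables together with the discrete B\"acklund transformation in Proposition~\ref{prop:discrete_Backlund}. The two relevant identities are $x_{k,s}^{(n)}=u_{2k-1,s}^{(n)}+u_{2k-2,s}^{(n)}$ and $y_{k,s}^{(n)}=u_{2k,s}^{(n)}u_{2k-1,s}^{(n)}$, valid for $k=1,2,\dots,m$ and $s=0,1,\dots,M$, with the boundary conventions $u_{0,s}^{(n)}=0$. Since the previous theorem already gives $x_{k,s}^{(n)}\to\lambda_{k,s}$ and $y_{k,s}^{(n)}\to0$ as $n\to\infty$, the whole task reduces to untangling these two relations to recover the limits of $u_{2k-1,s}^{(n)}$ and $u_{2k,s}^{(n)}$ separately.

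First I would set up an induction on $k$. For the base case $k=1$, the boundary condition $u_{0,s}^{(n)}=0$ forces $x_{1,s}^{(n)}=u_{1,s}^{(n)}$, so $u_{1,s}^{(n)}\to\lambda_{1,s}$ immediately. Then $y_{1,s}^{(n)}=u_{2,s}^{(n)}u_{1,s}^{(n)}\to0$, and because $u_{1,s}^{(n)}\to\lambda_{1,s}$ with $\lambda_{1,s}\neq0$ (the roots being distinct and, under the ordering hypothesis, nonzero), dividing yields $u_{2,s}^{(n)}\to0$. For the inductive step I would assume $u_{2k-2,s}^{(n)}\to0$ has already been obtained; then from $x_{k,s}^{(n)}=u_{2k-1,s}^{(n)}+u_{2k-2,s}^{(n)}$ and $x_{k,s}^{(n)}\to\lambda_{k,s}$ I conclude $u_{2k-1,s}^{(n)}\to\lambda_{k,s}$. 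Feeding this into $y_{k,s}^{(n)}=u_{2k,s}^{(n)}u_{2k-1,s}^{(n)}\to0$ and dividing by the nonzero limit $\lambda_{k,s}$ gives $u_{2k,s}^{(n)}\to0$, which closes the induction and establishes both asserted limits.

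The main obstacle is justifying the division step, i.e.\ ensuring $\lambda_{k,s}\neq0$ so that $u_{2k,s}^{(n)}=y_{k,s}^{(n)}/u_{2k-1,s}^{(n)}$ is legitimate and its limit is genuinely $0$ rather than an indeterminate form. This requires that each eigenvalue $\lambda_{k,s}$ be nonzero; under the standing hypothesis $|\lambda_{1,s}-\mu^{(n)}|>\cdots>|\lambda_{m,s}-\mu^{(n)}|$ with $\mu^{(n)}=-1/[(1-q)t^{(n)}]\to0$, the ordering of the moduli of the distinct roots can be used to guarantee this, but one should state the nonvanishing assumption explicitly or argue it from the distinctness of the roots and the asymptotic behavior of the shifts. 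I would therefore make the nonzero-eigenvalue condition part of the hypotheses (as is implicit in the moment-sequence construction, since the $c_{\ell,s}$ and Hankel nonvanishing already constrain the data), and then the argument above runs cleanly. With that point secured, the remaining manipulations are the elementary limit computations indicated, and no further estimate beyond the previous theorem is needed.
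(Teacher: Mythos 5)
Your proposal is essentially the paper's own argument: the paper proves this theorem in one sentence by substituting $x_{k,s}^{(n)}\to\lambda_{k,s}$ and $y_{k,s}^{(n)}\to 0$ into the discrete B\"acklund transformation of Proposition~\ref{prop:discrete_Backlund}, which is exactly the inductive untangling you carry out in detail. Your explicit treatment of the division step (requiring $\lambda_{k,s}\neq 0$ for $k\le m-1$, which in fact follows from the strict ordering $\vert\lambda_{1,s}-\mu^{(n)}\vert>\cdots>\vert\lambda_{m,s}-\mu^{(n)}\vert$ with $\mu^{(n)}\to 0$ and the distinctness of the roots) is a point the paper leaves implicit, so your version is, if anything, more careful.
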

\par
In the remainder of this section, we present numerical examples to show the convergence to matrix eigenvalues in the extended $q$-discrete Toda equation \eqref{eqn:extended_qdToda}. 
We used floating point arithmetic on a computer with a Windows 10 Professional
operating system with an Intel(R) Core(TM) i5-7200U CPU @ 2.50 GHz 2.71 GHz, and employed the software Python 3.8.2. 
\begin{figure}[tb]
\centering
\includegraphics[width=0.7\textwidth]{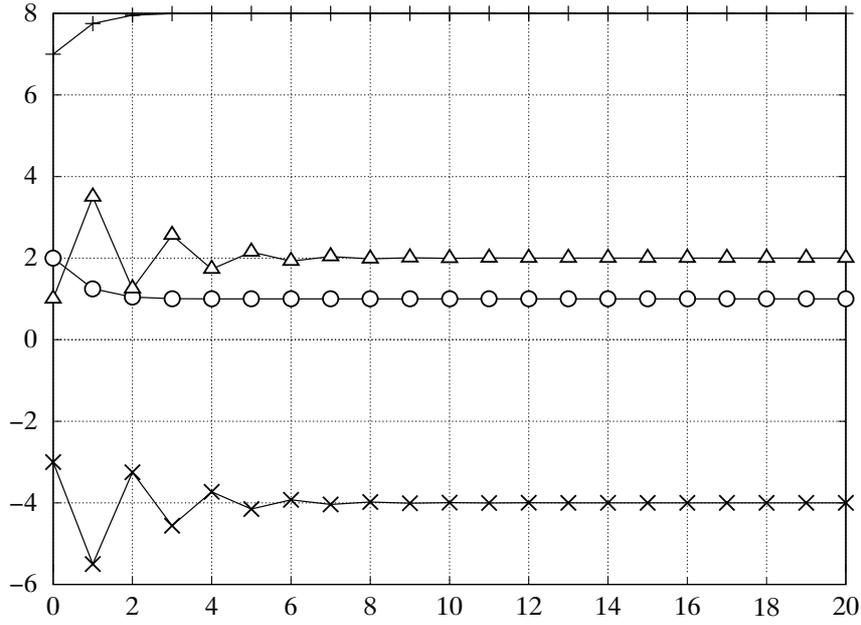}
\caption{Discrete-time $n$ ($x$-axis) versus values of $x_{1,0}^{(n)}$, $x_{1,1}^{(n)}$, $x_{2,0}^{(n)}$, and $x_{2,1}^{(n)}$ ($y$-axis). 
Pluses: $\{x_{1,0}^{(n)}\}_{n=0,1,\dots,20}$; crosses: $\{x_{1,1}^{(n)}\}_{n=0,1,\dots,20}$; circles: $\{x_{2,0}^{(n)}\}_{n=0,1,\dots,20}$; and triangles: $\{x_{2,1}^{(n)}\}_{n=0,1,\dots,20}$.\label{fig:convergence_eigenvalue}}
\end{figure}
\par
We first study the computation of eigenvalues of a $4\times4$ $2$-tridiagonal matrix
\[A^{(0)}=\left(\begin{array}{cccc}
7 & & 1 & \\
 & -3 & & 1 \\
6 & & 2 & \\
 & 5 & & 1
\end{array}\right).\]
It is easy to check that the eigenvalues of $A^{(0)}$ are $8,2,1$ and $-4$ because $A^{(0)}$ is a similar matrix to the block diagonal matrix with tridiagonal blocks
\[{\cal A}^{(0)}=\left(\begin{array}{cccc}
7 & 1 & & \\
6 & 2 & & \\
 & & -3 & 1 \\
 & & 5 & 1
\end{array}\right).\] 
We set the parameters in the extended $q$-discrete Toda equation \eqref{eqn:extended_qdToda} as $m=2$, $M=1$, $t^{(0)}=1$ and $q=1/2$. 
The initial values of the extended $q$-discrete Toda equation \eqref{eqn:extended_qdToda} are directly 
given from entries of $A^{(0)}$ as $x_{1}^{(0)}=x_{1,0}^{(0)}=7$, $x_{2}^{(0)}=x_{1,1}^{(0)}=-3$, $x_{3}^{(0)}=x_{2,0}^{(0)}=2$, 
$x_{4}^{(0)}=x_{2,1}^{(0)}=1$, $y_{1}^{(0)}=y_{1,0}^{(0)}=6$, and $y_{2}^{(0)}=y_{1,1}^{(0)}=5$. 
This differs from the case of the qd recursion formula, namely, the original discrete Toda equation that 
requires the decomposition of the target matrix into a product of lower and upper diagonal matrices. 
Figure \ref{fig:convergence_eigenvalue} draws the approach of the extended $q$-discrete Toda variables
$x_{1,0}^{(n)}, x_{2,0}^{(n)}$, $x_{1,1}^{(n)}$ and $x_{2,1}^{(n)}$ 
to the eigenvalues $8$, $1$, $-4$ and $2$, respectively, as $n$ grows larger. 
\begin{figure}[tb]
\centering
\includegraphics[width=0.7\textwidth]{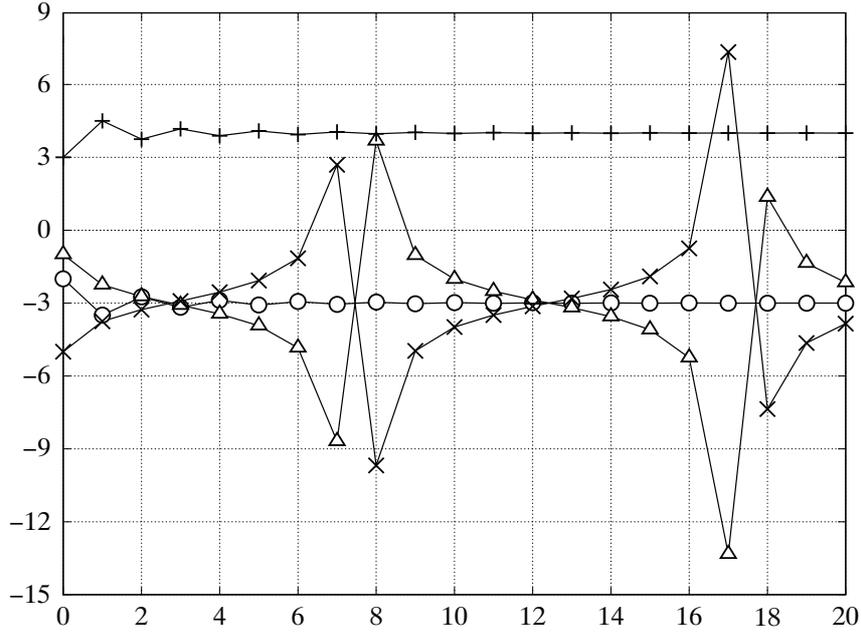}
\caption{Discrete-time $n$ ($x$-axis) versus values of $x_{1,0}^{(n)}$, $x_{1,1}^{(n)}$, $x_{2,0}^{(n)}$, and $x_{2,1}^{(n)}$ ($y$-axis). 
Pluses: $\{x_{1,0}^{(n)}\}_{n=0,1,\dots,20}$; crosses: $\{x_{1,1}^{(n)}\}_{n=0,1,\dots,20}$; 
circles: $\{x_{2,0}^{(n)}\}_{n=0,1,\dots,20}$; and triangles: $\{x_{2,1}^{(n)}\}_{n=0,1,\dots,20}$.\label{fig:not_convergence_eigenvalue}}
\end{figure}
\par
Next, we prepare the $4\times4$ matrices
\[A^{(0)}=\left(\begin{array}{cccc}
3 & & 1 & \\
 & -5 & & 1 \\
6 & & -2 & \\
 & -5 & & -1
\end{array}\right) 
\quad and \quad
{\cal A}^{(0)}=\left(\begin{array}{cccc}
3 & 1 & & \\
6 & -2 & & \\
 & & -5 & 1 \\
 & & -5 & -1
\end{array}\right)\]
with real eigenvalues $4$ and $-3$ and complex eigenvalues $-3+i$ and $-3-i$. 
In the extended $q$-discrete Toda equation \eqref{eqn:extended_qdToda}, we set the initial values and parameters as
$x_{1}^{(0)}=x_{1,0}^{(0)}=3$, $x_{2}^{(0)}=x_{1,1}^{(0)}=-5$, $x_{3}^{(0)}=x_{2,0}^{(0)}=-2$, $x_{4}^{(0)}=x_{2,1}^{(0)}=-1$, 
$y_{1}^{(0)}=y_{1,0}^{(0)}=6$, and $y_{2}^{(0)}=y_{1,1}^{(0)}=-5$, as well as $m=2$, $M=1$, $t^{(0)}=1$ and $q=1/2$. 
Figure \ref{fig:not_convergence_eigenvalue} implies that $x_{1,0}^{(n)}$ and $x_{2,0}^{(n)}$ respectively converge to real eigenvalues $4$ and $-3$,but $x_{2,0}^{(n)}$ and $x_{2,1}^{(n)}$ do not converge as $n\to\infty$. 
In other words, we numerically verified that the extended $q$-discrete Toda equation \eqref{eqn:extended_qdToda} 
can be applied to computing only real eigenvalues even in the case where target matrices have complex eigenvalues. 
Though complex eigenvalues are computed using the extended $q$-discrete Toda equation \eqref{eqn:extended_qdToda}, 
those of the block that is not diagonalized are almost unchanged from $-3+i$ and $-3-i$. 
\begin{figure}[tb]
\centering
\includegraphics[width=0.7\textwidth]{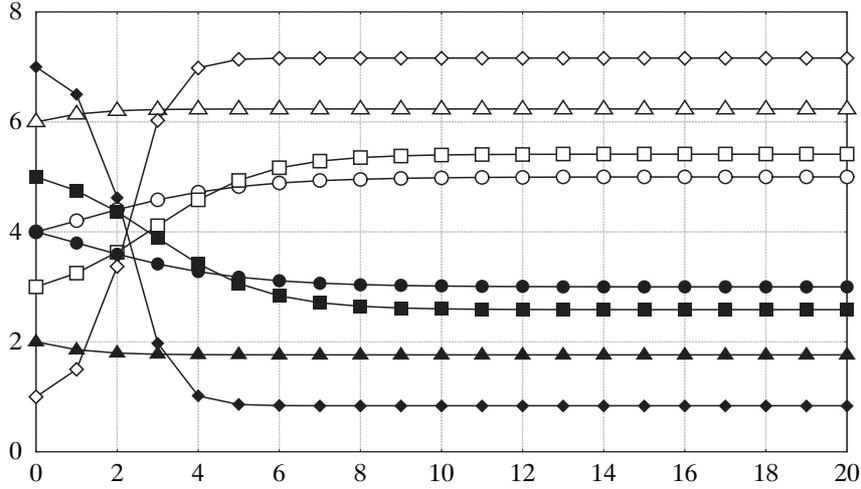}
\caption{Discrete-time $n$ ($x$-axis) versus values of $x_{1,0}^{(n)},x_{1,1}^{(n)},x_{1,2}^{(n)},x_{1,3}^{(n)},x_{2,0}^{(n)},x_{2,1}^{(n)},x_{2,2}^{(n)}$ and $x_{2,3}^{(n)}$ ($y$-axis). 
Circle: $\{x_{1,0}^{(n)}\}_{n=0,1,\dots,20}$; squares: $\{x_{1,1}^{(n)}\}_{n=0,1,\dots,20}$; 
triangles: $\{x_{1,2}^{(n)}\}_{n=0,1,\dots,20}$; diamonds: $\{x_{1,3}^{(n)}\}_{n=0,1,\dots,20}$; 
filled-circle: $\{x_{2,0}^{(n)}\}_{n=0,1,\dots,20}$; filled-squares: $\{x_{2,1}^{(n)}\}_{n=0,1,\dots,20}$; 
filled-triangles: $\{x_{2,2}^{(n)}\}_{n=0,1,\dots,20}$; filled-diamonds: $\{x_{2,3}^{(n)}\}_{n=0,1,\dots,20}$.
\label{fig:M_parameter_eigenvalue}}
\end{figure}
\par
The last example concerns the case of $8\times8$ matrices
\[A^{(0)}=\left(\begin{array}{cccccccc}
4 & & & & 1 \\
& 3 & & & & 1 \\
& & 6 & & & & 1 \\
& & & 1 & & & & 1 \\
1& & & & 4 \\
& 1 & & & & 5 \\
& & 1 & & & & 2 \\
& & & 1 & & & & 7
\end{array}\right) 
\quad and \quad
{\cal A}^{(0)}=\left(\begin{array}{cccccccc}
4 & 1 & & \\
1 & 4 & & \\
 & & 3 & 1 \\
 & & 1 & 5 \\
 & & & & 6 & 1 \\
 & & & & 1 & 2 \\
 & & & & & & 1 & 1 \\
 & & & & & & 1 & 7 \\
\end{array}\right)\]
with eigenvalues $4\pm1,4\pm\sqrt{2},4\pm\sqrt{5}$ and $4\pm\sqrt{10}$. 
We set the initial values and parameters as 
$x_{1}^{(0)}=x_{1,0}^{(0)}=4$, $x_{2}^{(0)}=x_{1,1}^{(0)}=3$, $x_{3}^{(0)}=x_{1,2}^{(0)}=6$, $x_{4}^{(0)}=x_{2,1}^{(0)}=1$, 
$x_{5}^{(0)}=x_{2,0}^{(0)}=4$, $x_{6}^{(0)}=x_{2,1}^{(0)}=5$, $x_{7}^{(0)}=x_{2,2}^{(0)}=2$, $x_{8}^{(0)}=x_{2,1}^{(0)}=7$, 
$y_{1}^{(0)}=y_{1,0}^{(0)}=1$, $y_{2}^{(0)}=y_{1,1}^{(0)}=1$, $y_{3}^{(0)}=y_{1,2}^{(0)}=1$, $y_{4}^{(0)}=y_{1,3}^{(0)}=1$ 
as well as $m=2$, $M=3$, $t^{(0)}=1$ and $q=1/2$. 
in the extended $q$-discrete Toda equation \eqref{eqn:extended_qdToda}. 
Figure \ref{fig:M_parameter_eigenvalue} shows that $x_{1,0}^{(n)},x_{1,1}^{(n)},x_{1,2}^{(n)},x_{1,3}^{(n)},x_{2,0}^{(n)},x_{2,1}^{(n)},x_{2,2}^{(n)}$ and $x_{2,3}^{(n)}$, respectively, \yadd{converge} to eigenvalues $5,4+\sqrt{2},4+\sqrt{5},4+\sqrt{10},3,4-\sqrt{2},4-\sqrt{5}$ and $4-\sqrt{10}$ as $n\to\infty$. Comparing Figure \ref{fig:M_parameter_eigenvalue} with Figure \ref{fig:convergence_eigenvalue}, we can see that $M=3$ case has the asymptotic behavior as $n$ grows larger similarly to the $M=1$ case.
%
%
% ------------------------------------
\section{Concluding remarks}
\label{sec:6}
% ------------------------------------
In this paper, we first considered an extension of the $q$-Toda equation which is a $q$-analogue of the famous Toda equation, and related it to eigenvalue problems of $(M+1)$-tridiagonal matrices and block diagonal matrices whose blocks are tridiagonal. 
We next showed that time-discretization of the extended $q$-Toda equation can generate similarity transformations of the $(M+1)$-tridiagonal matrices. 
We also found the relationship between the Toda case and LV case and related the LV case to the same eigenvalue problem as the Toda case. 
Finally, we proved convergence to eigenvalues in the extended $q$-discrete Toda equation and presented numerical examples for verifying it. 
\par
Our future works \yadd{from the numerical analysis perspective} are to examine the numerical stability of the similarity transformations 
generated by the extended $q$-discrete Toda equation 
and accelerating the convergence speed by introducing explicit shifts, rather than implicit shifts. 
From the viewpoint of the study of integrable systems, 
we plan to relate a $q$-analogue of hungry integrable systems to eigenvalue problems. 
\section*{Acknowledgements}
The authors thank the reviewer for his/her careful reading and constructive suggestions.
This work was partially supported by the joint project of Kyoto University and Toyota Motor Corporation, titled ``Advanced Mathematical Science for Mobility Society''.
\section*{Disclosure statement}
No potential conflict of interest was reported by the authors.
\section*{ORCID}
R. Watanabe: 0000-0002-5758-9587
\bibliographystyle{tfs}
\bibliography{watanabe_qToda.bbl}

\end{document}